\newcommand{\R}{\mathbf{R}}
\renewcommand{\AA}{\mathcal{A}}
\newcommand{\BB}{\mathcal{B}}
\newcommand{\CC}{\mathcal{C}}
\newcommand{\HH}{\mathcal{H}}
\newcommand{\PP}{\mathcal{P}}
\newcommand{\ran}{\mathop{\mathrm{ran}}}
\renewcommand{\ker}{\mathop{\mathrm{ker}}}
\newcommand{\Var}{\mathop{\rm{Var}}}
\newcommand{\Cov}{\mathop{\rm{Cov}}}
\newcommand{\corr}{\mathop{\rm{corr}}}
\providecommand{\norm}[1]{\lVert#1\rVert}
\providecommand{\pair}[1]{\langle#1\rangle}
\newcommand{\onehalf}{\frac{1}{2}}
\newtheorem{definition}{Definition}[section]
\newtheorem{theorem}{Theorem}[section]
\newtheorem{corollary}[theorem]{Corollary}
\newtheorem{lemma}[theorem]{Lemma}
\newtheorem{example}{Example}
\title{A Simple Proof of the Fundamental Theorem of Asset Pricing}
\author{Keith A. Lewis}
\date{\today}
\begin{document}
\begin{abstract}
A simple statement and accessible proof of a version of the Fundamental
Theorem of Asset Pricing in discrete time is provided. Careful distinction
is made between prices and cash flows in order to provide uniform treatment
of all instruments. There is no need for a ``real-world''
measure in order to specify a model for derivative securities,
one simply specifies an arbitrage free model, tunes it to market data,
and gets down to the business of pricing, hedging, and managing the
risk of derivative securities.
\end{abstract}
\address{KALX, LLC \tt{\url{http://kalx.net}}}
\thanks{Peter Carr is entirely responsible for many
enjoyable and instructive discussions on this topic.
Andrew Kalotay provided background on Edward Thorpe and his contributions.
Alex Mayus provided practitioner insights.
Robert Merton graciously straightened me out on the early history.
Walter Schachermeyer provided background on the
technical aspects of the state of the art proofs.
I am entirely responsible for any omissions and errors.
}

\maketitle

\section{Introduction}
It is difficult to write a paper about the Fundamental Theorem of Asset
Pricing that is longer than the bibliography required to do justice to
the excellent work that has been done elucidating the key insight Fischer
Black, Myron Scholes, and Robert Merton had in the early '70's. At that
time, the Capital Asset Pricing Model and equilibrium reasoning dominated
the theory of security valuation so the notion that the relatively weak
assumption of no arbitrage could have such detailed implications about
possible prices resulted in well deserved Nobel prizes.

One aspect of the development of the FTAP has been the
technical difficulties involved in providing rigorous proofs and the
the increasingly convoluted statements of the theorem.
The primary contribution of this paper is a statement of the fundamental
theorem of asset pricing that is comprehensible to traders and risk
managers and a proof that is accessible to students at graduate level
courses in derivative securities. Emphasis is placed on distinguishing
between prices and cash flows in order to give a unified treatment of
all instruments. No artificial ``real world'' measures which are then
changed to risk-neutral measures needed. (See also Biagini and Cont
\cite{BiaCon2006}.)  One simply finds appropriate price deflators.

Section 2 gives a brief review of the history of the FTAP with an
eye to demonstrating the increasingly esoteric mathematics involved.
Section 3 states and proves the one period version and introduces a
definition of arbitrage more closely suited to what practitioners would
recognize. Several examples are presented to illustrate the usefulness
of the theorem. In section 4 the general result for discrete time models
is presented together with more examples.  The last section finishes
with some general remarks and a summary of the methodology proposed
in this paper. The appendix is an attempt to clairify attribution of
early results.

\section{Review}
From Merton's 1973 \cite{Mer1973} paper,
``The manifest characteristic of (21) is the number of variables
that it does {\em not} depend on'' where (21)
refers to the Black-Scholes 1973 \cite{BlaSch1973} option pricing formula
for a call having strike $E$ and expiration $\tau$
\begin{equation*}
f(S, \tau; E) = S\Phi(d_1) - Ee^{-rt}\Phi(d_1 - \sigma\sqrt{\tau}).
\end{equation*}
Here, $\Phi$ is the cumulative standard normal distribution, $\sigma^2$ is the
instantaneous variance of the return on the stock and $d_1 = [\log(S/E)
+ (r + \onehalf\sigma^2)\tau]/\sigma\sqrt{\tau}$.
In particular, the return on the stock does
not make a showing, unlike in the Capital Asset Pricing Model
where it shares center stage with covariance.
This was the key insight in the connection between arbitrage-free
models and martingales.

In the section immediately following Merton's claim he calls into
question the rigor of Black and Scholes' proof and provides his own.
His proof requires the bond process to have nonzero quadratic variation.
Merton 1974 \cite{Mer1974} provides what is now considered to be the
standard derivation.

A special case of the valuation formula that European option prices
are the discounted expected value of the option payoff under the risk
neutral measure makes its first appearance in the Cox and Ross 1976
\cite{CoxRos1976} paper.  The first version of the FTAP in a form we would
recognize today occurs in a Ross 1978 \cite{Ros1978} where it is called
the Basic Valuation Theorem.  The use of the Hahn-Banach theorem in the
proof also makes its first appearance here, although it is not clear
precisely what topological vector space is under consideration.  The
statement of the result is also couched in terms of market equilibrium,
but that is not used in the proof. Only the lack of arbitrage in the
model is required.

Harrison and Kreps \cite{HarKre1979} provide the first rigorous
proof of the one period FTAP (Theorem 1) in a Hilbert space setting. They
are also the first to prove results for general diffusion processes with
continuous, nonsingular coefficients and make the premonitory statement
``Theorem 3 can easily be extended to this larger class of processes,
but one then needs quite a lot of measure theoretic notation to make a
rigorous statement of the result.''

The 1981 paper of Harrison and Pliska \cite{HarPli1981} is primarily
concerned with models in which markets are complete (Question 1.16),
however they make the key observation, ``Thus the parts of probability
theory most relevant to the general question (1.16) are those results,
usually abstract in appearance and French in origin, which are invariant
under substitution of an equivalent measure.'' This observation applies
equally to incomplete market models and seems to have its genesis in
the much earlier work of Kemeny 1955 \cite{Kem1955} and Shimony 1955
\cite{Shi1955} as pointed out by W. Schachermeyer.

D. Kreps 1981 \cite{Kre1981} was the first to replace the assumption of
no arbitrage with that of no {\em free lunch}: ``The financial market
defined by $(X,\tau)$, $M$, and $\pi$ admits a free lunch if there are
nets $(m_\alpha)_{\alpha \in I} \in M_0$ and $(h_\alpha)_{\alpha\in I}
\in X_+$ such that $\lim_{\alpha\in I} (m_\alpha -  h_\alpha) = x$ for
some $x\in X_+\setminus\{0\}$.'' It is safe to say the set of traders
and risk managers that are able to comprehend this differs little from
the empty set.  It was a brilliant technical innovation in the theory but
the problem with first assuming a measure for the paths instrument prices
follow was that it made it difficult to apply the Hahn-Banach theorem. The
dual of $L^\infty(\tau)$ under the norm topology is intractable. The
dual of $L^\infty(\tau)$ under the weak-star topology is $L^1(\tau)$,
which by the Radon-Nikodym theorem can be identified with the set of
measures that are absolutely continuous with respect to $\tau$. This
is what one wants when hunting for equivalent martingale measures,
however one obstruction to the proof is that the positive functions
in $L^\infty(\tau)$ do not form a weak-star open set. Krep's highly
technical free lunch definition allowed him to use the full plate of
open sets available in the norm topology that is required for a rigorous
application of the Hahn-Banach theorem.

The escalation of technical machinery continues in Dalang, Morton
and Willinger 1990 \cite{DalMorWil1990}. This paper gives a rigorous
proof of the FTAP in discrete time for an arbitrary probability space
and is closest to this paper in subject matter.  They correctly point
out an integrability condition on the price process is not economically
meaningful since it is not invariant under change of measure.  They give
a proof that does not assume such a condition by invoking a nontrivial
measurable selection theorem. They also mention, ``However, if in addition
the process were assumed to be bounded, ...'' and point out how this
assumption could simplify their proof. The robust arbitrage definition
and the assumption of bounded prices is also used the original paper,
Long Jr. 1990 \cite{Lon1990},  on numeraire portfolios.

The pinnacle of abstraction comes in Delbaen and Schachermeyer 1994
\cite{DelSch1994} where they state and prove the FTAP in the continuous
time case. Theorem 1.1 states an equivalent martingale measure exists
if and only if there is {\em no free lunch with vanishing risk}: ``There
should be no sequence of final payoffs of admissible integrands, $f_n =
(H^n\cdot S)_\infty$, such that the negative parts $f_n^-$ tend to zero
uniformly and such that $f_n$ tends almost surely to a $[0,\infty]$-valued
function $f_0$ satisfying $P[f_0 > 0] > 0$.''  The authors were completely
correct when they claim ``The proof of Theorem 1.1 is quite technical...''

The fixation on change of measure and market completeness resulted in
increasingly technical definitions and proofs.  This paper presents a new
version of the Fundamental Theorem of Asset Pricing in discrete time.
No artificial probability measures are introduced and no ``change of
measure'' is involved.  The model allows for negative prices and for
cash flows (e.g., dividends, coupons, carry, etc.) to be associated with
instruments. All instruments are treated on an equal basis and there is
no need to assume the existence of a risk-free asset that can be used
to fund trading strategies.

As is customary, perfect liquidity is assumed: every instrument can
be instantaneously bought or sold in any quantity at the given price.
What is not customary is that prices are bounded and there is no a
priori measure on the space of possible outcomes. The algebras of sets
that represent available information determine the price dynamics that
are possible in an arbitrage-free model.

\section{The One Period Model}
The one period model is described by a vector, $x\in\R^m$, representing
the prices of $m$ instruments at the beginning of the period, a set
$\Omega$ of all possible outcomes over the period, and a bounded function
$X\colon\Omega\to\R^m$, representing the prices of the $m$ instruments
at the end of the period depending on the outcome, $\omega\in\Omega$.

\begin{definition}
Arbitrage exists if there is a vector $\gamma\in\R^m$ such that
$\gamma\cdot x < 0$ and $\gamma\cdot X(\omega)\ge0$ for all $\omega\in\Omega$.
\end{definition}

The cost of setting up the position $\gamma$ is
$\gamma\cdot x = \gamma_1 x_1 + \cdots + \gamma_m x_m$.
This being negative means money is made
by putting on the position. When the position is liquidated at the end
of the period, the proceeds are $\gamma\cdot X$. This being non-negative
means no money is lost.

It is standard in the literature to introduce an arbitrary probability
measure on $\Omega$ and use the conditions $\gamma\cdot x = 0$ and 
$\gamma\cdot X\ge0$ with $E[\gamma\cdot X]>0$
to define an arbitrage opportunity, e.g.,
Shiryaev, Kabanov, Kramkov and Melnikov \cite{ShiKabKraMel1994}
section 7.3, definition 1.  Making nothing
when setting up a position and having a nonzero probability of making a
positive amount of money with no estimate of either the probability or
amount of money to be made is not a realistic definition of an arbitrage
opportunity.  Traders want to know how much money they make up-front
with no risk of loss after the trade is put on.  
This is what Garman \cite{Gar1985} calls {\em strong arbitrage}.

Define the \emph{realized return} for a position, $\gamma$, by
$R_\gamma = \gamma\cdot X/\gamma\cdot x$,
whenever $\gamma\cdot x\not=0$. If there exists $\zeta\in\R^m$
with $\zeta\cdot X(\omega) = 1$ for $\omega\in\Omega$ (a 
zero coupon bond) then the price is $\zeta\cdot x = 1/R_\zeta$.
Zero interest rates correspond to a realized return of 1.

Note that arbitrage is equivalent to
the condition $R_\gamma < 0$ on $\Omega$ for some $\gamma\in\R^m$.
In particular, negative interest rates do not necessarily imply arbitrage. 

The set of all arbitrages form a cone since this set is closed
under multiplication by a positive scalar and addition. The following 
version of the FTAP shows how to compute an arbitrage when it exists.

\begin{theorem}{(One Period Fundamental Theorem of Asset Pricing)}
Arbitrage exists if and only if $x$ does not belong to the smallest
closed cone containing the range of $X$. If $x^*$ is the nearest point in
the cone to $x$, then $\gamma = x^* - x$ is an arbitrage.
\end{theorem}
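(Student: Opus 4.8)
The plan is to recognize the ``smallest closed cone containing the range of $X$'' as a nonempty, closed, convex set $C$ in $\R^m$ and then to exploit the Hilbert-space structure of $\R^m$ through the nearest-point projection. First I would record the relevant properties of $C$: since a ``cone'' here is closed under addition and nonnegative scaling, $C$ is convex and contains $0$, and it is closed by construction, so the nearest point $x^*$ to $x$ exists and is unique. I would also note the minimality property that $C$ is contained in every closed cone that contains the range of $X$.

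For the direction ``$x\notin C$ implies arbitrage,'' set $\gamma = x^*-x$ and invoke the variational characterization of the projection onto a closed convex set: $\langle y-x^*,\,x-x^*\rangle\le 0$ for every $y\in C$. Since $x-x^* = -\gamma$, this rewrites as $\gamma\cdot y \ge \gamma\cdot x^*$ for all $y\in C$. The crucial step is to sharpen this using the cone property: substituting $y = t x^*$ for $t\ge 0$ gives $(t-1)\,\gamma\cdot x^* \ge 0$ for all $t\ge 0$, which forces $\gamma\cdot x^* = 0$, and therefore $\gamma\cdot y \ge 0$ for every $y\in C$; in particular $\gamma\cdot X(\omega)\ge 0$ for all $\omega\in\Omega$. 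Finally, writing $x = x^*-\gamma$ yields $\gamma\cdot x = \gamma\cdot x^* - \norm{\gamma}^2 = -\norm{\gamma}^2 < 0$, where $\gamma\neq 0$ because $x\notin C$ forces $x\neq x^*$. Thus $\gamma = x^*-x$ is an arbitrage, which simultaneously verifies the explicit formula in the statement.

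For the converse, suppose $\gamma$ is an arbitrage, so $\gamma\cdot x < 0$ while $\gamma\cdot X(\omega)\ge 0$ for all $\omega$. The half-space $H = \{\,y\in\R^m : \gamma\cdot y \ge 0\,\}$ is itself a closed cone and contains the range of $X$, so by minimality $C\subseteq H$. Since $\gamma\cdot x < 0$ gives $x\notin H$, we conclude $x\notin C$, completing the equivalence.

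I expect the only genuine obstacle to be the cone-scaling argument that promotes the one-sided projection inequality $\gamma\cdot y \ge \gamma\cdot x^*$ into the sharp identities $\gamma\cdot x^* = 0$ and $\gamma\cdot y\ge 0$ on $C$; the existence and uniqueness of $x^*$ and the separation in the converse are standard finite-dimensional facts. Boundedness of $X$ plays no essential role beyond the standing hypotheses, since projection onto any nonempty closed convex subset of $\R^m$ is automatic.
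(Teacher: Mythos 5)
Your proposal is correct, and its core --- the hard direction --- is the paper's own argument: project $x$ onto the closed convex cone $\CC$, use the scaling $y = tx^*$ to force $\gamma\cdot x^* = 0$, and conclude $\gamma\cdot y\ge 0$ for all $y\in\CC$ together with $\gamma\cdot x = -\norm{\gamma}^2 < 0$. The only cosmetic difference there is that you invoke the standard variational inequality $\langle y - x^*,\, x - x^*\rangle\le 0$ as a known fact and then get $\gamma\cdot y \ge \gamma\cdot x^* = 0$ in one stroke, whereas the paper keeps its lemma self-contained, deriving both $\gamma\cdot x^* = 0$ and $\gamma\cdot y\ge0$ from scratch by expanding $\norm{tx^* - x}^2$ and $\norm{ty + x^* - x}^2$ as quadratics in $t$. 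Where you genuinely depart from the paper is the converse. The paper argues that any point of the cone is arbitrarily close to a finite sum $\sum_j X(\omega_j)\pi_j$ with $\pi_j > 0$, so an arbitrage $\gamma$ would give $\gamma\cdot x\ge 0$ in the limit; this tacitly uses the (standard but unproved there) fact that the smallest closed cone containing $X(\Omega)$ is the closure of the set of finite nonnegative combinations of its points. Your argument avoids that characterization entirely: the half-space $\{y\in\R^m : \gamma\cdot y\ge 0\}$ is itself a closed cone containing $X(\Omega)$, so by minimality it contains $\CC$, and $\gamma\cdot x < 0$ then places $x$ outside $\CC$. That route is cleaner and uses only the defining property of ``smallest''; what the paper's density formulation buys in exchange is the explicit picture of cone elements as nonnegative pricing combinations of outcomes, which sets up the passage to price deflators $x = \pair{X,\Pi}$ immediately after the theorem.
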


\begin{proof}
If $x$ belongs to the cone, it is arbitrarily close to a finite sum
$\sum_j X(\omega_j) \pi_j$, where $\omega_j\in\Omega$ and $\pi_j > 0$ for
all $j$. If $\gamma\cdot X(\omega) \ge0$ for all $\omega\in\Omega$ then $\gamma\cdot
\sum X(\omega_j) \pi_j \ge0$, hence $\gamma\cdot x$ cannot be negative. The
other direction is a consequence of the following with $\CC$ being the
smallest closed cone containing $X(\Omega)$.
\end{proof}

\begin{lemma}
If $\CC\subset\R^m$ is a closed cone and $x\not\in \CC$, then there
exists $\gamma\in\R^m$ such that $\gamma\cdot x < 0$ and $\gamma\cdot y \ge0$
for all $y\in \CC$.
\end{lemma}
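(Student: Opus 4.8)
The plan is to exploit the Euclidean structure of $\R^m$ and prove this as a separating-hyperplane statement via nearest-point projection, exactly as the theorem's phrasing (``$x^*$ is the nearest point in the cone to $x$'') suggests. First I would observe that $\CC$, being closed under addition and under multiplication by positive scalars, is convex, and by hypothesis it is closed. Hence the continuous function $y\mapsto\norm{y-x}^2$ attains its minimum over $\CC$ at some point $x^*$: after intersecting $\CC$ with a closed ball large enough to meet it, the feasible set is compact, so a minimizer exists. Set $\gamma=x^*-x$; since $x\notin\CC$ while $x^*\in\CC$, we have $\gamma\neq0$.

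The heart of the argument is the variational inequality characterizing $x^*$. For every $y\in\CC$ and every $t\in[0,1]$ the point $x^*+t(y-x^*)$ again lies in $\CC$ by convexity, so $\norm{x-x^*}^2\le\norm{x-x^*-t(y-x^*)}^2$. Expanding the right-hand side and letting $t\to0^+$ yields $(x-x^*)\cdot(y-x^*)\le0$, that is $\gamma\cdot(y-x^*)\ge0$, so $\gamma\cdot y\ge\gamma\cdot x^*$ for all $y\in\CC$.

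Now I would use the cone property to sharpen this. Because $ty\in\CC$ for all $t>0$ whenever $y\in\CC$, the inequality $t(\gamma\cdot y)=\gamma\cdot(ty)\ge\gamma\cdot x^*$ must hold for every $t>0$. Sending $t\to\infty$ forces $\gamma\cdot y\ge0$ (otherwise the left side tends to $-\infty$), and sending $t\to0^+$ gives $\gamma\cdot x^*\le0$. Applying the first conclusion to $y=x^*\in\CC$ gives $\gamma\cdot x^*\ge0$ as well, so $\gamma\cdot x^*=0$. This establishes $\gamma\cdot y\ge0$ for all $y\in\CC$, one of the two required conclusions.

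Finally, writing $x=x^*-\gamma$ I would compute $\gamma\cdot x=\gamma\cdot x^*-\norm{\gamma}^2=-\norm{\gamma}^2<0$, using $\gamma\neq0$. The step I expect to require the most care is the projection characterization itself---ensuring the minimizer exists via compactness and correctly extracting the variational inequality from the first-order condition---since once that inequality is in hand, the homogeneity of the cone does the remaining work almost mechanically.
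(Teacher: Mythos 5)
Your proof is correct and takes essentially the same approach as the paper's: project $x$ onto the closed convex cone to get $x^*$, set $\gamma = x^* - x$, establish the two facts $\gamma\cdot x^* = 0$ and $\gamma\cdot y \ge 0$ for $y\in\CC$, and conclude $\gamma\cdot x = \gamma\cdot x^* - \norm{\gamma}^2 < 0$. The only cosmetic difference is how the first-order conditions are extracted---you use the generic convex variational inequality (perturbing along $x^* + t(y - x^*)$) and then cone scaling, while the paper perturbs along $tx^*$ and $ty + x^*$ directly---but the substance is identical.
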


\begin{proof}
This result is well known, but here is an elementary self-contained
proof.  Since $\CC$ is closed and convex, there exists $x^*\in \CC$
such that $\norm{x^* - x} \le \norm{y - x}$ for all $y\in \CC$.
We have $\norm{x^* - x} \le \norm{tx^* - x}$ for $t \ge 0$, so $0 \le
(t^2 - 1)\norm{x^*}^2 - 2(t - 1)x^{*}\cdot x = f(t)$. Because $f(t)$
is quadratic in $t$ and vanishes at $t = 1$, we have $0 = f'(1) =
2\norm{x^*}^2 - 2x^{*}\cdot x$, hence $\gamma\cdot x^* = 0$.  Now $0 <
\norm{\gamma}^2 = \gamma\cdot x^* - \gamma\cdot x$, so $\gamma\cdot x < 0$.

Since $\norm{x^* - x} \le \norm{ty + x^* - x}$ for $t \ge 0$ and $y\in
\CC$, we have $0 \le t^2\norm{y}^2 + 2ty\cdot(x^* - x)$. Dividing by $t$
and setting $t = 0$ shows $\gamma\cdot y \ge0$.  
\end{proof}

Let $B(\Omega)$ be the Banach algebra of bounded real-valued functions
on $\Omega$. Its dual, $B(\Omega)^* = ba(\Omega)$, is the space of
finitely additive measures on $\Omega$, e.g., Dunford and Schwartz \cite{DunSch1954}.
If $\PP$ is the set of non-negative measures in $ba(\Omega)$, then
$\{\pair{X,\Pi}:\Pi\in\PP\}$ is the smallest closed cone containing
the range of $X$, where the angle brackets indicate the dual pairing.
There is no arbitrage if and only if there exists a non-negative finitely
additive measure, $\Pi$, on $\Omega$ such that $x = \langle X,\Pi\rangle$.
We call such \(\Pi\) a {\em price deflator}.

If $V\in B(\Omega)$ is the payoff function of an instrument and $V =
\gamma\cdot X$ for some $\gamma\in\R^m$, then the cost of replicating the payoff
is $\gamma\cdot x = \langle \gamma\cdot X,\Pi\rangle = \langle V,\Pi\rangle$.
Of course the dimension of such perfectly replicating payoff functions
can be at most $m$. The second fundamental theorem of asset pricing
states that when there are complete markets, the price is unique. But
that never happens in the real world.

If a zero coupon bond, $\zeta\in\R^m$, exists then the riskless realized
return is $R = R_\zeta = 1/\Pi(\Omega)$. If we let $P = \Pi R$, then $P$
is a probability measure and $x = \langle X/R,P\rangle = EX/R$. With
\(V\) as in the previous paragraph, the cost of the replicating payoff 
is \(v = EV/R\), the expected discounted payoff.

\subsubsection{Managing Risk}
The current theoretical foundations of Risk Mangagment are lacking
\footnote{As empirically verified in September 2008}. The
classical theory assumes complete markets and perfect hedging
and fails to provide useful tools for quantitatively assessing how wishful
this thinking is. 

The main defect of most current risk measures is that they fail
to take into account active hedging. E.g., VaR\cite{Jor2006} assumes trades
will be held to some time horizon and only considers a percentile loss.
The only use to someone running a business that they might
lose \(X\) in \(n\) days with probability \(p\) if they do nothing
is to put a tick in a regulatory checkbox.

Multi-period models will be considered below, but a first step
is to measure the least squared error in the one-period model. 
Given any measure \(\Pi\) and any payoff \(V\in B(\Omega)\), we can minimize
\(\pair{(\gamma\cdot X - V)^2, \Pi}\). The solution is
\(\gamma = \pair{XX^T,\Pi}^{-1}\pair{XV,\Pi}\).
The least squared error is 
\[\min_\gamma \pair{(\gamma\cdot X - V)^2, \Pi} 
= \pair{V^2,\Pi} -  \pair{XV,\Pi}^T\pair{XX^T,\Pi}^{-1}\pair{XV,\Pi}.\]

In the case of a two instrument market \(X = (R,S)\) 
where \(R\) is the realized return on a zero coupon bond we get
\(\gamma = ((EV - n ES)/R, n)\) where \(n = \Cov(S,V)/\Var S\) 
and the expectation corresponds to the probability measure
\(P = \Pi R\). If we further assume \(x = (1, s)\) we have
\(\gamma\cdot x = EV/R - n (ES/R - s)\) and
the least squared error reduces to \(\sin^2\theta\Var(V)/R\) where
\(\cos\theta\) is the correlation of \(S\) with \(V\).

If \(\Pi\) is a price deflator we get the same answer
for the price as in the one-period model without the need
to involve the Hahn-Banach theorem.

\subsection{Examples}
This section illustrates consequences of the one period model.
Standard results follow from rational application of mathematics instead
of ad hoc arguments.

\begin{example}{(Put-Call parity)}
Let \(\Omega = [0,\infty)\), \(x = (1, s, c, p)\), and 
\(X(\omega) = (R, \omega, (\omega - k)^+, (k - \omega)^+)\). 
\end{example}

This models a bond with riskless realized return \(R\), a stock that
can take on any non-negative value, and a put and call with the same strike.
Take \(\gamma = (-k/R, 1, -1, 1)\). Since
\(\gamma\cdot X(\omega) = -k + \omega - (\omega - k)^+ + (k - \omega)^+ = 0\)
it follows \(0 = x\cdot \gamma = -k/R + s - c + p\) so
\(s - k/R = c - p\). 

This is the first thing traders check with any European option model.
Put-call parity does not hold in general for American options because
the optimal exercise time for each option is not necessarily the same.

\begin{example}{(Cost of Carry)}
Let $\Omega = [0,\infty)$, $x = (1, s, 0)$, 
and $X(\omega) = (R, \omega, \omega - f)$.
\end{example}
This models a bond with riskless realized
return $R$, a stock, and a forward contract on the stock with forward $f$.
The smallest cone containing the range of $X$ is spanned
by $X(0) = (R, 0, -f)$ and $\lim_{\omega\to\infty}X(\omega)/\omega = (0, 1, 1)$.
Solving $(1, s) = a(R, 0) + b(0, 1)$ gives $a = 1/R$ and $b = s$.
This implies $0 = -f/R + s$ so $f = Rs$.

\begin{example}{(Standard Binomial Model)}
Let $\Omega = \{d, u\}$, $0<d<u$, $x = (1, s, v)$
and $X(\omega) = (R, s\omega, V(s\omega))$, where $V$ is any given function.
\end{example}
This is the usual (MBA) parametrization for the one period binomial model with
a risk-less bond having realized return $R$, and a stock having price $s$
that can go to either $sd$ or $su$.  The smallest cone containing the
range of $X$ is spanned by $X(d)$ and $X(u)$.  Solving $(1, s) = aX(d)
+ bX(u)$ for $a$ and $b$ yields $a = (u - R)/R(u - d)$ and
$b = (R - d)/R(u - d)$.
The condition that $a$ and $b$ are non-negative implies $d
\le R \le u$. The no arbitrage condition on the third component implies
\begin{equation*}
  v = \frac{1}{R}\left(\frac{u - R}{u - d} V(sd)
    + \frac{R - d}{u - d} V(su)\right).
\end{equation*}
In a binomial model, the option is a linear combination of the bond
and stock. This is obviously a serious defect in the model.
Solving $V(sd) = mR + nsd$ and $V(su) = mR + nsu$ for $n$ we see
the number of shares of stock to purchase in order to replicate the option
is \(n = (V(su) - V(sd))/(su - sd)\). 
Note that if $V$ is a call spread consisting of long one call with strike
slightly greater than $sd$ and short one call with strike slightly less
than $su$, then $\partial v/\partial s = 0$ since $V'(sd) = 0 = V'(su)$.

\begin{example}{(Binomial Model)}
Let $\Omega = \{S^+, S^-\}$, $x = (1, s, v)$, and 
$X(\omega) = (R, \omega, V(\omega))$, where $V$ is any given function.
\end{example}
As above we find
\begin{equation*}
	v = \frac{1}{R}\left(\frac{S^+ - Rs}{S^+ - S^-} V(S^-) 
		+ \frac{Rs - S^-}{S^+ - S^-} V(S^+)\right)
\end{equation*}
and the number of shares of stock required to replicate the option is
$n = (V(S^+) - V(S^-)/(S^+ - S^-)$. Note $\partial v/\partial s = n$
indicates the number of stock shares to buy in order to replicate
the option.

\begin{example}
Let $\Omega = [90,110]$, $x = (1, 100, 6)$, 
and $X(\omega) = (1, \omega, \max\{\omega - 100, 0\})$.
\end{example}
This corresponds to
zero interest rate, a stock having price 100 that will certainly end with
a price in the range 90 to 110, and a call with strike 100. One might
think the call could have any price between 0 are 10 without entailing
arbitrage, but that is not the case.

This model is not arbitrage free.  The smallest cone containing the
range of $X$ is spanned by $X(90)$, $X(100)$, and $X(110)$. It is easy
to see that $x$ does not belong to this cone since it lies above the
plane determined by the origin, $X(90)$ and $X(110)$.

Using $e_b$, $e_s$, and $e_c$ as unit vectors in the bond, stock, and call
directions, $X(90) = e_b + 90e_s$ and $X(110) = e_b + 110e_s + 10e_c$.
Grassmann algebra\cite{Pea1999} yields $X(110)\wedge X(90)
= 90 e_b\wedge e_s + 110e_s\wedge e_b + 10e_c\wedge e_b + 900e_c\wedge e_s
= -900e_s\wedge e_c + 10e_c\wedge e_b - 20e_b\wedge e_s$. The vector
perpendicular to this is $-900e_b + 10e_s - 20e_c$.

After dividing by 10, we can read off an arbitrage from this: borrow 90
using the bond, buy one share of stock, and sell two calls. The amount
made by putting on this position is $-\gamma\cdot x = 90 - 100 + 12 = 2$. At
expiration the position will be liquidated to pays $\gamma\cdot X(\omega)
= -90 + \omega - 2\max\{\omega - 100, 0\} = 10 - |100 - \omega| \ge 0$
for $90\le\omega\le 110$.

\begin{example}
Let $\Omega = [90,110]$, $x = (100, 9.1)$, 
and $X(\omega) = (\omega, \max\{\omega - 100, \})$.
\end{example}

Eliminating the bond does not imply the call can have any price
between 0 and 10 without arbitrage. The position
$\gamma = (1, -11)$ is an arbitrage.

\begin{example}{(Normal Model)}
Let \(\Omega = (-\infty,\infty)\), \(x = (1, s)\), 
\(X = (R, S)\) with \(R\) scalar, and \(S\) normally distributed.
\end{example}
This model was developed by Louis Bachelier in his
1900 PhD Thesis\cite{Bac1900} with an implicit dependence on \(R\). 
Choose the parameterization \(S = Rs(1 + \sigma Z)\) where
where Z is standard normal and the price deflator is \(\Pi = P/R\)
where \(P\) is the probability measure underlying
\(Z\). This model is arbitrage free for any value of \(\sigma\),
however it does allow for negative stock values. As long as
\(\sigma\) is much smaller than \(s\) the probability of
negative prices is negligible. Every model has its limitations.

A useful formula is \(\Cov(N, f(M)) = \Cov(N,M)Ef'(M)\) whenever
\(M\) and \(N\) are jointly normal.  This follows from \(Ee^{\alpha N}
f(M) = Ee^{\alpha N} E f(M + \alpha\Cov(M,N))\), taking a derivative
with respect to \(\alpha\), then setting \(\alpha = 0\).

The price of a put option with strike \(k\) is 
\begin{align*}
p(k) &= E(k - S)^+/R\\
&= E(k - S)1(S \le k)/R\\
&= (k/R) P(S\le k) - (ES/R)1(S\le k)\\
&= (k/R - s)P(S\le k) + (\Var(S)/R) E\delta_k(S)
\end{align*}
since \(d1(k - s)^+/ds = -\delta_k(s)\), where \(\delta_k\) is a 
delta function with unit mass at \(k\).

Let \(\phi(z) = e^{-z^2/2}/\sqrt{2\pi}\) be
the standard normal density and \(\Phi(z) = \int_{-\infty}^z \phi(z)\,dz\)
be the cumulative standard normal distribution.
We have \(E\delta_k(S) = E\delta_k(Rs(1 + \sigma Z)) = \phi(z)/Rs\sigma\)
where \(z = (k/Rs - 1)/\sigma\) hence
\(p(k) = (k/R - s)\Phi(z) + s\sigma\phi(z)\).
For an at-the-money option, \(k = Rs\), this reduces to
\(p(k) = s\sigma/\sqrt{2\pi}\).

The hedge position in the underlying is 
\(\partial p(k)/\partial s = -ER1(S\le k)/R = -\Phi(z)\) so
the at-the-money hedge is to short \(1/2\) share of stock.

For a general European option with payoff \(p\) we have the delta
hedge is \(\Cov(S,f(S))/\Var(S) = Ep'(S)\). If \(p\) is linear
then we can find a perfect hedge so let's estimate the least
squared error for quadratic payoffs. Letting \(\mu_k
= E(S - f)^k\) be the \(k\)-th central moment, where \(f = Rs = ES\),
and using \(EZ^2 = 1\) and \(EZ^4 = 3\) we find

\begin{align*}
\Var(p(S)) &= \mu_2 p'(f)^2 + (\mu_4 - \mu_2^2)p''(f)^2/4\\
&= f^2\sigma^2 p'(f)^2 + f^4\sigma^4 p''(f)^2/2.\\
\end{align*}
Since \(\Cov(S,p(S)) = \Var(S)Ep'(S) = \Var(S)p'(f)\) we have
\begin{align*}
\corr(S,p(S)) &= 1/\sqrt{1 + f^2\sigma^2 p''(f)^2/2p'(f)^2}\\
&\approx 1 - f^2\sigma^2 p''(f)^2/4p'(f)^2\\
\end{align*}
if \(p'(f) > 0\)
so \(\sin\theta \approx f\sigma p''(f)/2p'(f)\) 
for small \(\sigma\). The least squared error is
\(\Var(p(S))\sin^2\theta/R \approx f^2\sigma^2p''(f)^2/4R\)
which is second order in \(\sigma\) and does not depend (strongly)
on \(p'(f)\).

If \(p'(f) = 0\) then the correlation is zero and the the best
hedge is a cash position equal to \(Ep(S)\). If \(p'(f) < 0\)
a similar estimate holds for the correlation tending to \(-1\).

A curious result is that the at-the-money correlation for a call
is constant:
\(\corr(S,(S - f)^+) = 1/\sqrt{2 - 2/\pi} \approx 0.856\)
independent of \(R\), \(s\), and \(\sigma\).
This follows from \(\Cov(S,p(S)) = \Var(S)/2\) and
\(\Var p(S) = \Var(S)(1/2 - 1/2\pi)\) where \(p(x) = (x - f)^+\).

One technique
traders use to smooth out gamma for at-the-money options is to extend 
the option expiration by a day or two. This gives a quantitative estimate
of how bad that hedge might be.

%
%

\subsection{An Alternate Proof}
The preceding proof of the fundamental theorem of asset pricing
does not generalized to multi-period models.

Define $A\colon\R^m\to \R\oplus B(\Omega)$ by $A\xi = -\gamma\cdot x \oplus
\gamma\cdot X$.  This linear operator represents the account statements
that would result from putting on the position $\gamma$ at the beginning
of the period and taking it off at the end of the period.
Define $\PP$
to be the set of $\{p \oplus P\}$ where $p > 0$ is in $\R$ and $P \ge 0$
is in $B(\Omega)$.  Arbitrage exists if and only if $\ran A = \{A\gamma
: \gamma\in \R^m\}$ meets $\PP$.  If the intersection is empty, then by
the Hahn-Banach theorem \cite{BanMaz1933} there exists a hyperplane
$\HH$ containing $\ran A$ that does not intersect $\PP$. Since we are
working with the norm topology, clearly $1\oplus 1$ is the center of an
open ball contained in $\PP$, so the theorem applies.  The hyperplane
consist of all $y\oplus Y\in \R\oplus B(\Omega)$ such that $0 = y\pi +
\langle Y,\Pi\rangle$ for some $\pi\oplus\Pi\in\R\oplus ba(\Omega)$.

First note that $\langle \PP, \pi\oplus\Pi\rangle$ cannot contain
both positive and negative values. If it did, the convexity of $\PP$
would imply there is a point at which the dual pairing is zero and thereby 
meets $\HH$. 
We may assume that the dual pairing is always positive and that $\pi
= 1$.  Since $0 = \pair{A\gamma, \pi\oplus\Pi} = \pair{ -\gamma\cdot
x, \pi} + \pair{\gamma\cdot X, \Pi}$ for all $\gamma\in\R^m$
it follows $x = \langle X,\Pi\rangle$ for the non-negative measure
$\Pi$. This completes the alternate proof.

This proof does not yield the arbitrage vector when it exists, however
it can be modified to do so. Define $\PP^+ = \{\pi\oplus\Pi : \langle
p\oplus P,\pi\oplus\Pi\rangle > 0, p\oplus P\in\PP\}$. The Hahn-Banach
theorem implies $\ran A\cap\PP \not=\emptyset$ if and only if $\ker
A^*\cap\PP^+ = \emptyset$, where $A^*$ is the adjoint of $A$ and
$\ker A^* = \{\pi\oplus\Pi : A^*(\pi\oplus\Pi) = 0\}$. If the later
holds we know $0 < \inf_{\Pi\ge0} \norm{-x + \langle X,\Pi\rangle}$
since $A^*(\pi\oplus\Pi) = -x\pi + \langle X,\Pi\rangle$. The same
technique as in the first proof can now be applied.




\section{Multi-period Model}
The multi-period model is specified by an increasing sequence of times
$(t_j)_{0\le j\le n}$ at which transactions can occur, a sequence
of algebras $(\AA_j)_{0\le j\le n}$ on the set of possible outcomes
$\Omega$ where $\AA_j$ represents the information available at time
$t_j$, a sequence of bounded $\R^m$ valued functions $(X_j)_{0\le
j\le n}$ with $X_j$ being $\AA_j$ measurable that represent the prices
of $m$ instruments, and a sequence of bounded $\R^m$ valued functions
$(C_j)_{1\le j\le n}$ with $C_j$ being $\AA_j$ measurable that represent
the cash flows associated with holding one share of each instrument over
the preceding time period. We further assume the cardinality of $\AA_0$
is finite, and the $\AA_j$ are increasing.

A {\em trading strategy} is sequence of bounded $\R^m$ valued functions
$(\Gamma_j)_{0\le j\le n}$ with $\Gamma_j$ being $\AA_j$ measurable
that represent the amount in each security purchased at time $t_j$.
Your {\em position} is $\Xi_j = \Gamma_0 + \cdots + \Gamma_j$, the
accumulation of trades over time. A trading strategy is
called {\em closed out} at time $t_j$ if $\Xi_j = 0$. Note in the one period case closed out trading strategies have the
form $\Gamma_0 = \gamma$, $\Gamma_1 = -\gamma$.

The amount your {\em account} makes at time \(t_j\) is
\(A_j = \Xi_{j-1}\cdot C_j - \Gamma_j\cdot X_j\), \(0\le j\le n\),
where we use the convention \(C_0 = 0\). The financial interpretation
is that at time \(t_j\) you receive cash flows based on the position
held from \(t_{j-1}\) to \(t_j\) and are charged for trading 
\(\Gamma_j\) shares at prices \(X_j\).



\begin{definition}
{\rm Arbitrage} exists if there is trading strategy
that makes a strictly positive amount on the initial trade
and non-negative amounts until it is closed out.
\end{definition}

We now develop the mathematical machinery required to state and prove
the Fundamental Theorem of Asset Pricing.

Let \(B(\Omega,\AA,\R^m)\) denote the Banach algebra of bounded 
\(\AA\) measurable functions on \(\Omega\) taking values in \(\R^m\). 
We write this as \(B(\Omega,\AA)\) when \(m = 1\).

Recall that if $B$ is a Banach algebra we can define the product 
\(yy^*\in B^*\) for \(y\in B\) and \(y^*\in B^*\) by 
\(\pair{x,yy^*} = \pair{xy,y^*}\) for \(x\in B\),
a fact we will use below. 

The standard statement of the FTAP
uses conditional expectation. This version uses restriction
of measures, a much simpler concept.
The conditional expectation of
a random variable is defined by \(Y = E[X|\AA]\) if and only
\(Y\) is \(\AA\) measurable and
\(\int_A Y\,dP = \int_A X\,dP\) for all \(A\in\AA\).
Using the dual pairing this says \(\pair{1_AY,P} =
\pair{1_AX,P}\) for all \(A\in\AA\).
Using the product just defined we can write this as
\(\pair{1_A,YP} = \pair{1_A,XP}\) so \(YP(A) = XP(A)\)
for all \(A\in\AA\). If \(P\) has domain \(\AA\) this
says \(YP = XP|_\AA\).

We need a slight generalization. If \(Y\) is
\(\AA\) measurable, \(P\) has domain \(\AA\), and
\(\pair{1_AY,P} = \pair{1_AX,Q}\) for all \(A\in\AA\),
then \(YP = XQ|_\AA\). There is no requirement that
\(P\) and \(Q\) be probability measures.

Let $\PP\subset\bigoplus_{j=0}^n B(\Omega, \AA_j)$ be the cone of all
$\oplus_j P_j$ such that $P_0 > 0$ and $P_j \ge 0$, $1\le j\le n$.
The dual cone, $\PP^+$ is defined to be the set of all 
$\oplus_j \Pi_j$ in
$\bigoplus_{j=0}^n ba(\Omega, \AA_j)$ such that
$\langle P,\Pi\rangle = \langle \oplus_j P_j, \oplus_j \Pi_j\rangle
= \sum_j\langle P_j,\Pi_j\rangle > 0$.

\begin{lemma}
The dual cone $\PP^+$ consists of $\oplus_j\Pi_j$ such that $\Pi_0 > 0$,
and $\Pi_j\ge 0$ for $1\le j\le n$.
\end{lemma}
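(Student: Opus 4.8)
The plan is to prove the two inclusions separately, using the finiteness of $\AA_0$ throughout. First I would fix the reading of the symbols: $P_0 > 0$ is taken pointwise (equivalently, $P_0$ is strictly positive on each of the finitely many atoms of $\AA_0$), while $\Pi_0 > 0$ is taken to mean $\Pi_0 \ge 0$ together with $\Pi_0(\Omega) > 0$. These are the correct dual notions, and matching them up is the first thing to get right.

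For the inclusion that any $\oplus_j \Pi_j$ with $\Pi_0 \ge 0$, $\Pi_0(\Omega) > 0$, and $\Pi_j \ge 0$ ($j \ge 1$) lies in $\PP^+$, I would take an arbitrary $P = \oplus_j P_j \in \PP$ and split the pairing as $\pair{P,\Pi} = \pair{P_0,\Pi_0} + \sum_{j\ge1}\pair{P_j,\Pi_j}$. Each $j \ge 1$ term is non-negative, pairing a non-negative function with a non-negative measure. For the $j=0$ term, write $P_0 = \sum_i c_i 1_{a_i}$ over the atoms $a_i$ of $\AA_0$ with every $c_i > 0$; since $\AA_0$ is finite, $P_0 \ge \min_i c_i > 0$, so $\pair{P_0,\Pi_0} \ge (\min_i c_i)\,\Pi_0(\Omega) > 0$. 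Hence $\pair{P,\Pi} > 0$ for every $P \in \PP$, which is exactly membership in $\PP^+$.

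For the reverse inclusion I would assume $\oplus_j \Pi_j \in \PP^+$ and extract the three conditions by testing against chosen elements of $\PP$. To get $\Pi_j \ge 0$ for fixed $j \ge 1$, take any $0 \le f \in B(\Omega,\AA_j)$ and feed in the element with $\epsilon\cdot 1$ in slot $0$, $f$ in slot $j$, and $0$ elsewhere; this lies in $\PP$ for each $\epsilon > 0$, so $\epsilon\,\Pi_0(\Omega) + \pair{f,\Pi_j} > 0$, and letting $\epsilon \downarrow 0$ gives $\pair{f,\Pi_j} \ge 0$, hence $\Pi_j \ge 0$. To get $\Pi_0 \ge 0$, fix an atom $a_i$ and test with $P_0 = 1_{a_i} + \epsilon(1 - 1_{a_i})$ (which is $\ge \epsilon$, so strictly positive) and zeros elsewhere, yielding $\Pi_0(a_i) + \epsilon\,\Pi_0(\Omega\setminus a_i) > 0$ and, as $\epsilon \downarrow 0$, $\Pi_0(a_i) \ge 0$ for every atom. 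Finally, testing with $P_0 = 1$ and zeros elsewhere gives $\Pi_0(\Omega) > 0$ directly.

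The one delicate point, and the only place I expect friction, is the interplay of strict and non-strict inequalities in slot $0$. Because membership in $\PP$ demands $P_0 > 0$ strictly, I cannot test $\Pi_0$ against a bare indicator $1_{a_i}$; the $\epsilon$-perturbation is needed to stay inside $\PP$, and the limiting argument then recovers only $\Pi_0(a_i) \ge 0$. Strict positivity of $\Pi_0$ must therefore be captured separately as $\Pi_0(\Omega) > 0$ through the single direct test $P_0 = 1$. It is worth flagging that the finiteness of $\AA_0$ is exactly what licenses writing $P_0$ as a finite atomic sum and bounding it below by a positive constant; without it the $j=0$ term could fail to be bounded away from $0$ and the first inclusion would break down.
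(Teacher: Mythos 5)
Your proof is correct, and it does strictly more than the paper's: the paper proves only that membership in $\PP^+$ forces the sign conditions, while you also prove the converse inclusion (which is where the finiteness of $\AA_0$ does real work, exactly as you flag). For the slots $j\ge 1$ your argument is literally the paper's own: put $\epsilon\cdot 1$ in slot $0$, a non-negative $f$ in slot $j$, pair, and let $\epsilon\downarrow 0$.

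The genuine difference is in slot $0$, and there you and the paper part ways on substance, not style. The paper's proof asserts that $\pair{P_0,\Pi_0}>0$ for all admissible $P_0$ forces $\Pi_0(A)>0$ on \emph{every} atom $A$ of $\AA_0$. Under your reading of the primal cone ($P_0>0$ pointwise), that assertion is false: with two atoms $A$ and $B$, the measure with $\Pi_0(A)=0$ and $\Pi_0(B)=1$ (and $\Pi_j=0$ for $j\ge1$) pairs strictly positively with every pointwise-positive $P_0$, hence lies in $\PP^+$ while vanishing on $A$. Your weaker conclusion --- $\Pi_0\ge0$ together with $\Pi_0(\Omega)>0$ --- is then the correct characterization, and your $\epsilon$-perturbation plus the single test $P_0=1$ is the right way to get it. The paper's atom-wise conclusion is correct only under the alternative reading in which $P_0>0$ means $P_0\ge0$ and $P_0\ne0$: then $1_A$ itself is an admissible test element, atom-wise strict positivity follows in one line, and it is also necessary for the converse inclusion. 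So the precise content of the lemma hinges on which cone is meant; you resolved the ambiguity one way, the paper's proof implicitly resolves it the other. The trade-off: the paper's reading buys a $\Pi_0$ strictly positive on every atom, which is convenient downstream when dividing by $\Pi_0$ (as in $\pair{V,\Pi_n/\Pi_0}$), while your reading corresponds to the stronger, pointwise notion of ``strictly positive initial profit,'' consistent with the paper's one-period strong-arbitrage definition. When $\AA_0$ is trivial --- the usual case --- the two readings coincide and both versions of the lemma agree.
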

\begin{proof}
Since $0 < \pair{P_0,\Pi_0}$ for $P_0 > 0$ we have $\Pi_0(A) > 0$
for every atom of $\AA_0$ so $\Pi_0 > 0$.  For every $\epsilon > 0$
and any $j > 0$ we have $0 < \epsilon \Pi_0(\Omega) + \pair{P_j,\Pi_j}$
for every $P_j \ge 0$.  This implies $\Pi_j \ge 0$. 
\end{proof}


\begin{theorem}{(Multi-period Fundamental Theorem of Asset Pricing)}
There is no arbitrage if and only if there exists $\oplus_i\Pi_i\in\PP^+$
such that
\[X_i \Pi_i = (C_{i + 1} + X_{i + 1})\Pi_{i + 1}|_{\AA_i},\quad 0\le i < n.\]
\end{theorem}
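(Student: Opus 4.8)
The plan is to recast arbitrage as an intersection condition and separate with Hahn--Banach, exactly as in the alternate proof of the one period theorem. Define the linear \emph{account operator} $A$ on the subspace of closed out trading strategies (those with $\Xi_n = 0$; we may take the closing time to be $t_n$ by appending null trades) by $A(\Gamma_j)_j = \oplus_{j=0}^n A_j$, where $A_j = \Xi_{j-1}\cdot C_j - \Gamma_j\cdot X_j$. Since $\Xi_{j-1}$ is $\AA_{j-1}$ (hence $\AA_j$) measurable and $C_j, X_j, \Gamma_j$ are $\AA_j$ measurable, the range lands in $\bigoplus_j B(\Omega,\AA_j)$. By the definition of arbitrage, arbitrage exists precisely when some closed out strategy produces $A_0 > 0$ and $A_j \ge 0$, i.e. when $\ran A$ meets the cone $\PP$. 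So I must show $\ran A \cap \PP = \emptyset$ if and only if a deflator $\oplus_i\Pi_i\in\PP^+$ satisfying the stated recursion exists.

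I would do the easy direction first (deflator $\Rightarrow$ no arbitrage). Given $\oplus_i\Pi_i\in\PP^+$ with $X_i\Pi_i = (C_{i+1}+X_{i+1})\Pi_{i+1}|_{\AA_i}$, compute $\pair{A\Gamma,\oplus_i\Pi_i}$ by Abel summation: writing $\Gamma_j = \Xi_j - \Xi_{j-1}$ and reindexing, for any closed out strategy
\[\pair{A\Gamma, \oplus_i\Pi_i} = \sum_{i=0}^{n-1}\Bigl(\pair{\Xi_i\cdot(C_{i+1}+X_{i+1}),\Pi_{i+1}} - \pair{\Xi_i\cdot X_i,\Pi_i}\Bigr),\]
where the boundary terms drop because $\Xi_{-1}=0$ and $\Xi_n=0$. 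Each summand vanishes: $\Xi_i$ is $\AA_i$ measurable, so pairing it against the measure identity $X_i\Pi_i = (C_{i+1}+X_{i+1})\Pi_{i+1}|_{\AA_i}$ (the restriction-of-measures remark from the preamble) gives $\pair{\Xi_i\cdot X_i,\Pi_i} = \pair{\Xi_i\cdot(C_{i+1}+X_{i+1}),\Pi_{i+1}}$. Hence $\pair{A\Gamma,\oplus_i\Pi_i}=0$ for every closed out strategy, and since $\oplus_i\Pi_i\in\PP^+$ is strictly positive on $\PP$, no $A\Gamma$ can lie in $\PP$: there is no arbitrage.

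For the hard direction (no arbitrage $\Rightarrow$ deflator), with $\ran A\cap\PP=\emptyset$ I would separate the subspace $\ran A$ from $\PP$ by Hahn--Banach, using that $\oplus_j\mathbf 1$ is an interior point of $\PP$ in the sup-norm topology; the dual of the finite direct sum is $\bigoplus_j ba(\Omega,\AA_j)$, so the separating functional has the form $\oplus_j\Pi_j$. As in the one period alternate proof, $\oplus_j\Pi_j$ vanishes on the subspace $\ran A$ and, by the convexity and connectedness argument, has one sign on $\PP$; normalizing, it lies in $\PP^+$, so the earlier Lemma supplies $\Pi_0 > 0$ and $\Pi_j\ge 0$. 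To recover the recursion I would feed $A$ the round-trip strategies $\Gamma_i = 1_G e_k$, $\Gamma_{i+1} = -1_G e_k$ (all other trades zero) for $G\in\AA_i$ and $e_k$ a coordinate vector; these are closed out, their only nonzero account entries are $A_i = -1_G X_i^{(k)}$ and $A_{i+1} = 1_G(C_{i+1}^{(k)}+X_{i+1}^{(k)})$, so $\pair{A\Gamma,\oplus_j\Pi_j}=0$ reads $\pair{1_G X_i^{(k)},\Pi_i} = \pair{1_G(C_{i+1}^{(k)}+X_{i+1}^{(k)}),\Pi_{i+1}}$. Ranging over $G\in\AA_i$ and $k$ is exactly $X_i\Pi_i = (C_{i+1}+X_{i+1})\Pi_{i+1}|_{\AA_i}$.

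The main obstacle is the strict positivity in the separation step, i.e. securing $\Pi_0 > 0$ rather than merely $\Pi_0\ge 0$. Vanilla geometric Hahn--Banach separates $\ran A$ from the interior of $\PP$ and yields only $\oplus_j\Pi_j\ge 0$ on $\PP$ together with vanishing on $\ran A$; a degenerate functional whose $\Pi_0$ vanishes on some atom of $\AA_0$ would \emph{not} be a price deflator. The work is to upgrade to a hyperplane through the origin missing all of $\PP$ (equivalently, $\Pi_0$ strictly positive on each of the finitely many atoms of $\AA_0$), which is where the finiteness of $\AA_0$ and the cone structure of $\PP$ must be used. Everything else---the account bookkeeping, the Abel summation, and the measure-restriction identity---is routine once this nondegeneracy is in hand.
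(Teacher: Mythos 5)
Your proposal follows the paper's proof of this theorem essentially step for step: the same account operator $A$ restricted to the subspace $\CC$ of closed out strategies, the same identification of arbitrage with $\ran A\cap\PP\ne\emptyset$, the same appeal to a sup-norm interior point of $\PP$ to invoke Hahn--Banach, the same convexity/sign argument to land in $\PP^+$, and the same round-trip strategies paired against characteristic functions to extract $X_i\Pi_i=(C_{i+1}+X_{i+1})\Pi_{i+1}|_{\AA_i}$. Your Abel-summation proof of the converse (deflator implies no arbitrage) is correct and is actually an addition: the paper's proof of the theorem argues only the hard direction, the converse computation appearing only later as the lemma on closed out self-financing strategies.

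The obstacle you flag in your final paragraph is a genuine gap, and you should know that the paper does not close it either: it simply asserts that Hahn--Banach yields a hyperplane containing $A\CC$ \emph{that does not meet} $\PP$, after which the sign argument gives $\Pi\in\PP^+$. What separation of a subspace from a convex set with nonempty interior actually yields is a nonzero functional vanishing on $A\CC$ and nonnegative on $\PP$, i.e.\ a hyperplane missing only $\mathrm{int}\,\PP$; such a functional may have $\Pi_0$ vanishing on an atom of $\AA_0$, and its hyperplane then passes through the points $P_0\oplus 0\oplus\cdots\oplus 0$ of $\PP$. The gap is substantive, not cosmetic. Take $n=1$, $\Omega=\{a,b\}$, $\AA_0=\AA_1=2^\Omega$, and one instrument with $X_0=1_{\{a\}}-1_{\{b\}}$, $C_1=0$, $X_1=1_{\{a\}}$. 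If ``strictly positive amount on the initial trade'' is read as $A_0>0$ everywhere, there is no arbitrage (on $a$, $A_0>0$ forces $\Gamma_0<0$ while $A_1\ge0$ forces $\Gamma_0\ge0$), yet the recursion evaluated on the atom $b$ reads $-\Pi_0(\{b\})=0$, so no measure with $\Pi_0$ positive on every atom of $\AA_0$ --- which is what the paper's Lemma says membership in $\PP^+$ means --- can satisfy it. The statement survives only under the atom-wise reading (initial account nonnegative everywhere and positive on some atom), and even then the strict positivity of $\Pi_0$ must be earned: for instance, localize the model to each of the finitely many atoms of $\AA_0$, prove the trivial-$\AA_0$ case by a Stiemke/Tucker-type alternative rather than plain separation (itself nontrivial when $\Omega$ is infinite, since separation can still return a functional with $\Pi_0=0$), and sum the resulting deflators over the atoms. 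That is exactly the work your last paragraph names but does not carry out; until it is done, neither your argument nor the paper's constitutes a complete proof.
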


Note each side of the equation is a vector-valued measure and
recall \(\Pi|_\AA\) denotes the measure \(\Pi\) restricted to
the algebra \(\AA\).

\begin{proof}
Define \(A\colon\bigoplus_{i=0}^{n}B(\Omega, \AA_i, \R^m) 
\to \bigoplus_{i=0}^{n}B(\Omega, \AA_i)\) by
\(A = \bigoplus_{0\le i \le n} A_i\). Define \(\CC\) to
be the subspace of strategies that are closed out
by time \(t_n\).

With \(\PP\) as above, no arbitrage is equivalent to
\(A\CC\cap\PP=\emptyset\). Again, the norm topology ensures that $\PP$
has an interior point so the Hahn-Banach theorem implies there exists
a hyperplane $\HH = \{X\in\bigoplus_{i=0}^{n}B(\Omega, \AA_i): \pair{X,\Pi} = 0\}$ for some $\Pi = \oplus_0^n\Pi_i$
containing $A\CC$ that does not meet $\PP$. It is not possible that
$\langle\PP,\Pi\rangle$ takes on different signs. Otherwise the convexity
of $\PP$ would imply $0 = \langle P,\Pi\rangle$ for some $P\in\PP$
so we may assume $\Pi\in\PP^+$. Note
$0 = \pair{A(\oplus_i\Gamma_i),\oplus_i\Pi_i}
= \sum_{i=0}^n\pair{\Xi_{i - 1}\cdot C_i - \Gamma_i\cdot X_i,\Pi_i}$
for all $\oplus_i\Gamma_i\in\CC$. 
Taking closed out strategies of the
form $\Gamma_i = \Gamma$, $\Gamma_{i+1} = -\Gamma$ having all other
terms zero yields, where $\Gamma$ is $\AA_i$ measurable,
gives $0 = \pair{\Xi_{i - 1}\cdot C_i - \Gamma_i\cdot X_i,\Pi_i}
+ \pair{\Xi_{i}\cdot C_{i+1} - \Gamma_{i+1}\cdot X_{i+1},\Pi_{i+1}}
= \pair{- \Gamma\cdot X_i,\Pi_i}
+ \pair{\Gamma\cdot C_{i+1} + \Gamma\cdot X_{i+1},\Pi_{i+1}}
$,
hence $\pair{\Gamma, X_i \Pi_i} = \pair{\Gamma, (C_{i+1} + X_{i+1})\Pi_{i+1}}$
for all $\AA_i$ measurable $\Gamma$.
Taking \(\Gamma\) to be a characteristic function proves
\(X_i\Pi_i = (C_{i+1} + X_{i+1})\Pi_{i+1}|_{\AA_i}\) for \(0\le i < n\).
\end{proof}

A simple induction shows
\begin{corollary}
With notation as above,
\begin{equation}
X_j \Pi_j
  = \sum_{j < i < k} C_i\Pi_i|_{\AA_j} + (C_k + X_k)\Pi_k|_{\AA_j},\quad j < k.
\end{equation}
\end{corollary}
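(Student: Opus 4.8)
The plan is to fix $j$ and induct on $k > j$, using the one-step identity from the theorem,
\[
X_i \Pi_i = (C_{i+1} + X_{i+1})\Pi_{i+1}|_{\AA_i}, \quad 0 \le i < n,
\]
as the engine that drives every step. For the base case $k = j+1$ the indexed sum $\sum_{j < i < k}$ is empty, so the asserted identity collapses to $X_j \Pi_j = (C_{j+1} + X_{j+1})\Pi_{j+1}|_{\AA_j}$, which is exactly the theorem applied at $i = j$.

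For the inductive step I would assume the identity holds for some $k$ with $j < k < n$ and derive it for $k+1$. Starting from
\[
X_j \Pi_j = \sum_{j < i < k} C_i \Pi_i|_{\AA_j} + (C_k + X_k)\Pi_k|_{\AA_j},
\]
I would split the final term as $C_k \Pi_k|_{\AA_j} + X_k \Pi_k|_{\AA_j}$ and rewrite only the piece $X_k \Pi_k$. Applying the theorem at $i = k$ gives $X_k \Pi_k = (C_{k+1} + X_{k+1})\Pi_{k+1}|_{\AA_k}$ as an equality of $\R^m$-valued measures on $\AA_k$; restricting both sides to $\AA_j$ and substituting yields
\[
X_j \Pi_j = \sum_{j < i < k} C_i \Pi_i|_{\AA_j} + C_k \Pi_k|_{\AA_j} + (C_{k+1} + X_{k+1})\Pi_{k+1}|_{\AA_j}.
\]
Absorbing the $C_k$ term into the sum reindexes it to $\sum_{j < i < k+1} C_i \Pi_i|_{\AA_j}$, which is precisely the claimed formula with $k$ replaced by $k+1$, closing the induction.

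The only point that needs care is the restriction bookkeeping in the step that rewrites $X_k \Pi_k|_{\AA_j}$. Because the algebras are increasing, $\AA_j \subseteq \AA_k$, so restricting a measure $\mu$ first to $\AA_k$ and then to $\AA_j$ agrees with restricting it directly to $\AA_j$; this nesting identity $(\mu|_{\AA_k})|_{\AA_j} = \mu|_{\AA_j}$ is what lets me collapse the doubly-restricted term $((C_{k+1}+X_{k+1})\Pi_{k+1}|_{\AA_k})|_{\AA_j}$ to $(C_{k+1}+X_{k+1})\Pi_{k+1}|_{\AA_j}$. I expect this to be the main (and essentially only) obstacle, though it is mild: everything else is linearity of the measures and the reindexing of a finite sum.
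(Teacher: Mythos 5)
Your proposal is correct and is exactly the ``simple induction'' the paper invokes (the paper gives no further detail): base case from the theorem at $i=j$, inductive step by applying the theorem at $i=k$ and restricting to $\AA_j$. Your attention to the nesting identity $(\mu|_{\AA_k})|_{\AA_j} = \mu|_{\AA_j}$, valid since the algebras are increasing, is the right point of care and completes the argument.
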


This corrects and generalizes formula (2) in chapter 2 of Duffie \cite{Duf1996}.
As we will see below, this corollary is the primary tool for constructing
arbitrage free models. In the case of zero cash flows and increasing algebras, 
the no arbitrage condition is equivalent to $(X_j\Pi_j)_{j\ge0}$
being a martingale, by a slight abuse of the word martingale.

A standard way to define models is to specify a measure \(P\) on
\(\Omega\) and price deflators of the form \(\Pi_i = D_iP\) for
some \(D_i\in B(\Omega,\AA_i)\). In this case we can write
\(X_i \Pi_i = (C_{i+1} + X_{i+1})\Pi_{i+1}|_{\AA_i}\)
as \(X_i D_i = E[(C_{i+1} + X_{i+1})D_{i+1}|\AA_i]\).

In the one period case there is no need to distinguish between price and
cash flows. In the multi-period case one can account for the cash flows,
as in Pliska \cite{Pli1997}, by stipulating the price decreases by the
amount of the cash flow. Explicitly distinguishing
between prices and cash flows provides a unified model that uniformly incorporates 
other cash flows such as bond coupons or foreign exchange carry.

We say a closed strategy, $\Gamma$, is {\em self-financing} if all but the first
and last component of $A\Gamma$ are zero. The cost at $t_0$ of creating the
cash flow $\Xi_{n-1}\cdot C_n - \Gamma_n\cdot X_n$ at $t_n$ is clearly $\Gamma_0\cdot X_0\Pi_0$.

\begin{lemma}
If \((\Gamma_j)\) is a closed out self-financing strategy then 
\[\pair{\Gamma_0\cdot X_0, \Pi_0}
= \pair{\Xi_{n-1}\cdot C_n - \Gamma_n\cdot X_n, \Pi_n}\].
\end{lemma}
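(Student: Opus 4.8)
The plan is to recognize this identity as the orthogonality relation already extracted in the proof of the multi-period FTAP, specialized by the self-financing hypothesis. Recall that a price deflator $\oplus_i\Pi_i\in\PP^+$ is exactly a $\Pi$ annihilating $A\CC$, so that $0 = \pair{A(\oplus_i\Gamma_i),\oplus_i\Pi_i} = \sum_{i=0}^{n}\pair{A_i,\Pi_i}$ for every closed out strategy, where $A_i = \Xi_{i-1}\cdot C_i - \Gamma_i\cdot X_i$. Granting this, the self-financing condition forces $A_i = 0$ for $1\le i\le n-1$, collapsing the sum to $\pair{A_0,\Pi_0} + \pair{A_n,\Pi_n} = 0$; since $C_0 = 0$ and $\Xi_{-1} = 0$ give $A_0 = -\Gamma_0\cdot X_0$, rearranging yields the claimed equality. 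So the whole lemma is a two-line deduction once the orthogonality relation is in hand.

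Because the lemma is stated against a fixed deflator rather than invoking Hahn-Banach, I would also give the self-contained telescoping that re-derives $\sum_i\pair{A_i,\Pi_i}=0$ directly from the deflator recursion $X_i\Pi_i = (C_{i+1}+X_{i+1})\Pi_{i+1}|_{\AA_i}$. Writing $\Gamma_i = \Xi_i - \Xi_{i-1}$ turns the account value into $A_i = \Xi_{i-1}\cdot(C_i + X_i) - \Xi_i\cdot X_i$. The key move is to pair the first term against $\Pi_i$: using the Banach-algebra product to push $C_i + X_i$ onto the measure, the $\AA_{i-1}$-measurability of $\Xi_{i-1}$ to replace $(C_i+X_i)\Pi_i$ by its restriction to $\AA_{i-1}$, and then the recursion to rewrite that restriction as $X_{i-1}\Pi_{i-1}$, one obtains $\pair{\Xi_{i-1}\cdot(C_i+X_i),\Pi_i} = \pair{\Xi_{i-1}\cdot X_{i-1},\Pi_{i-1}}$. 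Setting $b_i = \pair{\Xi_i\cdot X_i,\Pi_i}$ this reads $\pair{A_i,\Pi_i} = b_{i-1} - b_i$ for $1\le i\le n$, with the boundary term $\pair{A_0,\Pi_0} = -b_0$, so the sum telescopes to $-b_n = -\pair{\Xi_n\cdot X_n,\Pi_n}$, which vanishes because the strategy is closed out ($\Xi_n = 0$).

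I do not expect a genuine obstacle here; the work is entirely bookkeeping. The one place requiring care is the restriction step, where I must invoke the generalized identity (stated just before the FTAP) that for $\AA$-measurable $Y$ one may pass between $\pair{Y,\mu}$ and $\pair{Y,\mu|_{\AA}}$, applied component-wise to the vector measure $(C_i+X_i)\Pi_i$. The remaining subtlety is the boundary convention: confirming $A_0 = -\Gamma_0\cdot X_0$ from $C_0 = 0$ and the empty position $\Xi_{-1} = 0$, and that $\Gamma_0 = \Xi_0$, so that the surviving endpoint terms match the left- and right-hand sides of the stated equation exactly. Once these conventions are pinned down, the self-financing hypothesis does the rest by annihilating every interior term.
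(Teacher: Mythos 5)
Your proof is correct, and its engine is the same as the paper's: pair the deflator recursion $X_{i-1}\Pi_{i-1} = (C_i+X_i)\Pi_i|_{\AA_{i-1}}$ against the ($\AA_{i-1}$-measurable) position to shift $\pair{\Xi_{i-1}\cdot(C_i+X_i),\Pi_i}$ down to $\pair{\Xi_{i-1}\cdot X_{i-1},\Pi_{i-1}}$. The difference is organizational, and it is a real one. The paper runs an induction showing $b_j = \pair{\Xi_j\cdot X_j,\Pi_j}$ is constant for $j<n$, invoking the self-financing condition at every step to trade $\Xi_j\cdot C_{j+1}$ for $\Gamma_{j+1}\cdot X_{j+1}$; your telescoping identity $\pair{A_i,\Pi_i} = b_{i-1}-b_i$ makes no use of self-financing at all, so you first obtain the stronger statement that $\sum_{i=0}^n\pair{A_i,\Pi_i} = -b_n = 0$ for \emph{every} closed out strategy, and only then let self-financing annihilate the interior terms. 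That stronger statement is exactly the converse (``easy'') direction of the multi-period FTAP --- a deflator satisfying the recursion rules out arbitrage among closed out strategies --- which the paper's proof of the theorem never writes down (it only proves the Hahn--Banach direction), so your route supplies a missing piece as a by-product and cleanly separates what ``closed out'' buys (the vanishing of $b_n$) from what ``self-financing'' buys (the vanishing of interior terms). One caution: the assertion in your first paragraph that a deflator is ``exactly'' a functional annihilating $A\CC$ is not available off the shelf --- the paper establishes only one implication --- so your first-paragraph deduction would be circular on its own; it is precisely your second-paragraph telescoping that makes it legitimate, and it should be presented as the proof rather than as a supplement.
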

\begin{proof}
First we show that \(\pair{\Gamma_0\cdot X_0, \Pi_0} 
= \pair{\Xi_j\cdot X_j, \Pi_j}\) 
for \(j < n\).
The result holds for \(j = 0\). Assume it holds for \(j\),
then using the FTAP and self-financing condition
\begin{align*}
\pair{\Xi_j\cdot X_j, \Pi_j} 
&= \pair{\Xi_j\cdot(C_{j+1} + X_{j+1}), \Pi_{j+1}}\\
&= \pair{\Gamma_{j+1}\cdot X_{j+1} + \Xi_j\cdot X_{j+1}, \Pi_{j+1}}\\
&= \pair{\Xi_{j+1}\cdot X_{j+1}, \Pi_{j+1}}.
\end{align*}
Finally, \(\pair{\Xi_{n-1}\cdot X_{n-1}, \Pi_{n-1}}
= \pair{\Xi_{n-1}\cdot (C_n + X_n), \Pi_n}
= \pair{\Xi_{n-1}\cdot C_n - \Gamma_n\cdot X_n, \Pi_n}\)
since \(\Xi_{n-1} = -\Gamma_n\) for closed strategies.
\end{proof}

This lemma shows that
if a European derivative has payoff $V\colon\Omega\to\R$ at $t_n$ and
we can find a closed self-financing portfolio $(\Gamma_j)_{0\le j < n}$
such that $\Xi_{n-1}\cdot C_n - \Gamma_n\cdot X_n = V$, then the cost of a 
the replicating strategy is $\pair{V, \Pi_n/\Pi_0}$.
Since \(\Gamma_0\cdot X_0 = \pair{V, \Pi_n/\Pi_0}\) we can compute the initial
hedge by taking the derivative with respect to market values
\(\Gamma_0 = (d/dX_0) \pair{V, \Pi_n/\Pi_0}\). 

This formula is the foundation of delta hedging derivative securities.
In general such a strategy does not exist, but we could use an
optimization criteron, e.g., best least squares fit, and use the
fitting error as a measure of hedging risk.

\subsection{Examples}

\begin{example}{(Short Rate Process)}
A {\em short rate (realized return) process} $(R_j)_{j\ge0}$ is a scalar valued adapted process that
defines instruments having price $X_j = 1$ and a single non zero
cash flow $C_{j+1} = R_j$ at time $t_{j+1}$. 
\end{example}
No arbitrage implies $\Pi_j = R_j\Pi_{j+1}|_{\AA_j}$ so $R_j =
\Pi_j/\Pi_{j+1}|_{\AA_j}$.  If the price deflators are predictable,
i.e., $\Pi_{j+1}$ is $\AA_j$ measurable, $j \ge 0$, then $R_j =
\Pi_j/\Pi_{j+1}$. In this case the short rate process determines the
price deflators $\Pi_j = \Pi_0/(R_0\cdots R_{j-1})$, $j>0$.

Assuming price deflators are predictable is a tame assumption.
It means that at any given time one can borrow or lend at a known
rate over the next period. Note these can be used to guarantee
self-financing strategies always exist.

This result is the foundation of fixed income derivatives.
The price of all other fixed income derivatives (with no
default) are constrained by the short rate process. 

\begin{example}{(Zero Coupon Bonds)}
A zero coupon bond has a single cash flow \(C_k = 1\) at maturity \(t_k\).
\end{example}
Since \(X_j\Pi_j = \Pi_k|_{\AA_j}\) for a bond maturing
at \(t_k\) we have its price at time \(t_j \le t_k\) is 
\(X_j \equiv D_j(k) = \Pi_k/\Pi_j|_{\AA_j} = \Pi_k|_{\AA_j}/\Pi_j\).
The price at and after maturity is 0. Note \(D_j(j+1) = 1/R_j\).
The function \(j\mapsto D_0(j)\) is called the
{\em discount} or {\em zero} curve.

\begin{example}{(Forward Rate Agreement)}
A forward rate agreement starting at \(t_i\)
has price \(X_i = 0\) and
two non-zero cash flows,
\(C_j = -1\) at \(t_j\) and \(C_k = 1 + F_i(j,k)\delta(j,k)\) at \(t_k\)
where \(\delta(j,k)\) is the {\em daycount fraction} for the
interval \([t_j,t_k]\).
\end{example}
The {\em day count basis} (Actual/360, 30/360, etc.) is a market convention
that determines the day count fraction and is approximately equal to the time
in years of the corresponding interval.

We have \(0 = -1\Pi_j|_{\AA_i} + (1 + F_i(j,k)\delta(j,k)\Pi_k|_{\AA_i}\) so 
\[
F_i(j,k) = \frac{1}{\delta(j,k)}\left(\frac{\Pi_j}{\Pi_k} - 1\right)|_{\AA_i}
= \frac{1}{\delta(j,k)}\left(\frac{D_i(j)}{D_i(k)} - 1\right).
\] 
Forward rates are determined
by zero coupon bond prices since they are a portfolio of such.

Note that if a zero coupon bond with maturity \(t_k\) is available
at time \(t_j\) then \(F_j(j,k) = (1/D_j(k) - 1)/\delta(j,k)\) is
the {\em forward rate} over the interval.

\begin{example}{(Bonds)}
A bond is specified by {\em calculation dates} \(t_0 < t_1 <
\cdots < t_n\), cash flows
\(C_j = c \delta_j\), \(0 < j < n\), and
\(C_n = 1 + c \delta_n\) where \(\delta_j = \delta(j-1,j)\).
\end{example}
The price at time \(t_0\) satisfies \(X_0\Pi_0 = c\sum_{j=1}^n
\delta_j\Pi_j|{\AA_0} + \Pi_n|{\AA_0}\) so 
\(X_0 = c\sum_j \delta_jD_0(j) + D_0(n)\).
A bond is {priced at par} if \(X_0 = 1\) in which case
\(c = (1 - D_0(n))/\sum_j \delta_jD_0(j)\) is the {\em par coupon}.

\begin{example}{(Swaps)}
A swap is specified by calculation dates \(t_0 < t_1 <
\cdots < t_n\) and cash flows
\(C_j = (c - F_{j-1}(j-1,j)) \delta_j\), \(0 < j \le n\)
\end{example}
There are many types of swaps. This one is more accurately described
as paying fixed and receiveing float without exchange of principal.
It is also common for the day count basis of the fixed and
floating legs to be different.

A fundamental fact about the floating cash flow stream is
\begin{align*}
\sum_{j = 1}^n F_{j-1}(j-1,j) \delta_j \Pi_j|_{\AA_0}
&= \sum_{j = 1}^n (\Pi_{j-1}/\Pi_j - 1)|_{\AA_j} \Pi_j|_{\AA_0}\\
&= \sum_{j = 1}^n (\Pi_{j-1} - \Pi_j)|_{\AA_j}|_{\AA_0}\\
&= \Pi_0 - \Pi_n|_{\AA_0}.
\end{align*}
This shows the value of the floating leg is the same as
receiving a cash flow of 1 at \(t_0\) and paying a cash flow
of 1 at \(t_n\). The intuition is that the initial cash flow
can be invested at the prevailing forward rate over each
interval and rolled
over while harvesting the floating payments until maturity.

Swaps are typically issued at \(t_0\) with price \(X_0 = 0\).
Using the above fact shows the swap par coupon is determined
by the same formula as for a bond. More generally, if
\(X_t = 0\) for \(t\le t_0\) and \(X_t = 0\) we write
\[
F^\delta_t(t_0,\dots,t_n) = \frac{D_t(t_0) - D_t(t_n)}
{\sum_{j=1}^n \delta(j-1,j)D_t(t_j)}
\]
for the par coupon at time \(t\) corresponding to the
underlying (forward starting) swap. Note we are using
the actual times instead of the index as arguments.
Also note that a one period swap is simply a forward rate agreement.

\begin{example}{(Futures)}
The price of a futures is always zero. Given an underlying index
\(S_k\) at expiration \(t_k\), they are quoted as having
`price' \(\Phi_j\) at \(t_j\) with the constraint
\(\Phi_k = S_k\) at \(t_k\).
Their cash flows are \(C_j = \Phi_j - \Phi_{j-1}\), \(j \le k\).
\end{example}
No arbitrage implies \(0 = (\Phi_{j+1} - \Phi_j)\Pi_{j+1}|_{\AA_j}\).
If the deflators are
predictable then \(\Phi_j = \Phi_{j+1}|_{\AA_j} = S_k|_{\AA_j}\). 
The standard way of making this statement is to say futures quotes are
a martingale. 

If we assume there is a probability measure \(P\)
on \(\Omega\) such that \(\Pi_t = D_tP\) for some \(D_t\)
, the {\em stochastic discount} to time \(t\),
that are bounded \(\AA_t\) measurable functions then we can
write \(\pair{X,\Pi_t} = EXD_t\).

If \(F\) is a forward and \(D\) is the stochastic discount to
expiration we have \(0 = E(F - f)D
= EF ED + \Cov(F,D) - f ED\) so the {\em convexity} is
\(\phi - f = -\Cov(F,D)/ED\), where \(\phi = EF\) is the
futures rate. In general \(F\) and \(D\) have negative
correlation so futures quotes are higher than forward rates.

In the equity world it is often assumed the price deflators
are not stochastic and \(\Pi_t = D(0,t) \equiv D(t)\) is given. 
The {\em (instantaneous) spot rate}, \(r(t)\), is defined
by \(D(t) = e^{-tr(t)}\) and the {\em (instantaneous)
forward rate}, \(f(t)\), by \(D(t) = e^{-\int_0^t f(s)\,ds}\).
We also write \(D_s(t) = D(t)/D(s)\) for the discount
from time \(s\) to \(t\).
Stock volatilities swamp
out any dainty assumptions of stochastic rates.

\begin{example}{(Generalized Ho-Lee Model\cite{HoLee1986})}
The short rate process is \(R_t = \phi(t) + \sigma(t)B_t\).
\end{example}
The original Ho-Lee model specifies a constant volatility. It
allows the discount curve to be fitted to market
data. As in the Bachelier model, it allows interest rates to
be negative, but it has a simple closed form solution
using the fact \(\exp(-\int_0^t \Theta(s)^2/2\,ds + \int_0^t \Theta(s)\,dB_s)\)
is a martingale plus \(d(\Sigma(t)B_t) = \Sigma(t)\,dB_t + \Sigma'(t) B_t\)
so \(\int_t^u \sigma(s)B_s\,ds 
= \Sigma(u)B_u - \Sigma(t)B_t - \int_t^u \sigma(s)\,dB_s\)
where \(\Sigma' = \sigma\)
\begin{align*}
E_t e^{-\int_t^u \sigma(s)B_s\,ds}
&= E_t e^{-(\Sigma(u)B_u - \Sigma(t)B_t) + \int_t^u \Sigma(s)\,dB_s }\\
&= e^{-(\Sigma(u)B_t - \Sigma(t)B_t)} E_t e^{-(\Sigma(u)B_u - \Sigma(u)B_t) + \int_t^u \Sigma(s)\,dB_s }\\
&= e^{-(\Sigma(u)B_t - \Sigma(t)B_t)} E_t e^{\int_t^u (\Sigma(s) - \Sigma(u))\,dB_s }\\
&= e^{-(\Sigma(u)B_t - \Sigma(t)B_t)} e^{\frac{1}{2}\int_t^u (\Sigma(s) - \Sigma(u))^2\,ds }\\
\end{align*}
and \(E_t\) denotes conditional expectation with respect to time \(t\).
The generalized Ho-Lee model has discount prices
\[
D_t(u) = e^{-\int_t^u \phi(s) - \frac{1}{2}(\Sigma(s) - \Sigma(u))^2\,ds + (\Sigma(u) - \Sigma(t))B_t}
\]
where we reparameterize by replacing \(\sigma(t)\) with \(-\sigma(t)\).
In case of constant volatility we have
\[
D_t(u) = e^{-\int_t^u \phi(s) - \frac{1}{2}\sigma^2(s - u)^2\,ds + \sigma(u - t)B_t}.
\]
This shows the convexity in the Ho-Lee model is \(\phi(t) - f(t) = \frac{1}{2}\sigma^2t^2$
which is quadratic in \(t\).

\begin{example}{(Forwards)}
A forward is a contract issued at time $s$ and maturing at time $t$
having price $X_s = 0$ and one nonzero cash flow
$C_t = S_t - F_s(t)$ at time $t$, where $S_t$ is the price
at $t$ of the underlying and $F_s(t)$ is the forward rate that
is specified at time $s$.
\end{example}
Assuming no dividends \(S_sD(s) = S_tD(t)|_{\AA_s}\) so
\(0 = (X_sD(s) = (S_t - F_s(t))D(t)|_{\AA_s} = S_sD(s) - F_s(t))D(t)\) and
we have \(F_s(t) = S_s/D_s(t)\). This is just the cost-of-carry
formula. In the presence of dividends \((d_j)\) at \((t_j)\)
this formula becomes 
\(F_s(t) = \sum_{s < t_j \le t} d_j|_{\AA_s}/D_s(t_j) + S_s/D_s(t)\).
Note dividends may be random.

Binomial models are based on a {\em random walk}. 
Let \(\Omega = \{\omega = (\omega_1,\dots,\omega_n)
: \omega_j\in\{0,1\}, 1\le j\le n\}\) and let \(P\) be the measure
on \(\Omega\) with \(P(\{\omega\}) = 1/2^n\) for \(\omega\in\Omega\).
The equivalence
relation \([\omega]_j = [\omega']_j\) if and only if
\(\omega_i = \omega'_i\) for \(i\le j\) gives a partition that
determines the atoms of the algebra \(\AA_j\).

Random walk is the discrete time stochastic process
\(W_j(\omega) = \omega_1 + \cdots + \omega_j\), \(1\le j\le n\).
Note \(P(W_j = k) = {n\choose k}/2^n\), \(EW_j = j/2\),
and \(\Var(W_j) = j/2 - j^2/2\). If we let
\(Z_j = 2W_j - j\) then \(EZ_j = 0\) and \(\Var(Z_j) = j\).

Define \([\omega]_{j0} = [\omega]_{j+1}\cap \{\omega_{j+1} = 0\}\)
and similarly for \([\omega]_{j1}\) so \([\omega]_j\) is the
disjoint union of \([\omega]_{j0}\) and \([\omega]_{j1}\).
It is easy to see \(Z_{j+1}P|_{\AA_j} = Z_jP\). More generally
\begin{align*}
f(Z_{j+1}(\omega))P([\omega]_j)
&= f(Z_{j+1}(\omega))P([\omega]_{j0}) + f(Z_{j+1}(\omega))P([\omega]_{j1})\\
&= f(Z_{j}(\omega) - 1)P([\omega]_j)/2 + f(Z_{j}(\omega) + 1)P([\omega]_j)/2\\
\end{align*}
so \(f(Z_{j+1})P|_{\AA_j} = \frac{1}{2}(f(Z_{j} - 1) + f(Z_{j} + 1))P\).

\begin{example}{(Multi-period Binomial Model)}
Fix the annualized realized return \(R > 0\), 
the initial stock price \(s\), the drift \(\mu\), and the
volatility \(\sigma\).
Define \(X_j = (R_j, S_j) =
(R^j, se^{\mu j + \sigma Z_j})\).
\end{example}
Many price deflators exist but we will look for one having the
form \(\Pi_j = R^{-j}P\). Clearly \(R^{j+1}\Pi_{j+1}|_{\AA_j} = R^j\Pi_j\).
Since \(S_{j+1}\Pi_{j+1}|_{\AA_j} 
= (e^\mu/R)\frac{1}{2}(e^{-\sigma} + e^{\sigma})S_j\Pi_j\),
the model is arbitrage free if 
\(e^\mu = R/\cosh\sigma\).

\begin{example}{(Geometric Brownian Motion)}
Fix the spot rate \(r\), the initial stock price \(s\)
the drift \(\mu\), and the volatility \(\sigma\). Let \(B_t\) be standard
Brownian motion and define \(X_t = (e^{rt}, se^{\mu t + \sigma B_t})$.
\end{example}
Let \(P\) be Brownian measure and recall \(M_t^\lambda = 
e^{-\lambda^2t/2 + \lambda B_t}\)
is a martingale.
Looking for deflators of the form \(e^{-rt}P\)
ensures \(e^{-rt}\Pi_t|_{\AA_s} = e^{-rs}\Pi_s\).
Since \(S_t\Pi_t = se^{(\mu - r)t + \sigma B_t}\),
the model is arbitrage free if \(\mu = r - \sigma^2/2\).

The forward value of a put option paying \(\max\{k - S, 0\}\) 
at the end of the period is 
\(E\max\{k - S, 0\} = kP(S\le k) - ES1(S\le k)
= kP(S\le k) - ESP(Se^{\sigma^2t}\le k)\)
where we use \(Ee^N f(N) = Ee^N Ef(N + \Var(N))\). 
(More generally, \(Ee^Nf(N_1,...) = Ee^N Ef(N + \Cov(N, N_1),...)\) 
if \(N\),\(N_1\), ... are jointly normal.) 
This can be written 
\(E\max\{k - S, 0\} = k P(Z \le z) - f P(Z \le z - \sigma t)\)
where \(z = \sigma t/2 + (1/\sigma)\log k/f\), and \(f = Rs\)
is the forward price of the stock.

For a European option with payoff \(p\) at time \(t\), the
value of the option is \(v = e^{-rt}Ep(S_t)\). The delta is
\begin{align*}
\partial v/\partial s &= e^{-rt}Ep'(S_t)e^{(r - \sigma^2/2)t + \sigma B_t}\\
&= Ep'(e^{\sigma^2t}S_t)\\
\end{align*}
and the gamma is
\begin{align*}
\partial^2 v/\partial s^2 &= Ep''(e^{\sigma^2t}S_t)e^{\sigma^2t}e^{(r - \sigma^2/2)t + \sigma B_t}\\
&= e^{(r + \sigma^2)t}Ep''(e^{2\sigma^2t}S_t).\\
\end{align*}
%
%

\subsection{Infinitely Divisible Distrbutions}
Brownian motion is characterized as a stochastic process having
increments that are independent, stationary, and normally distributed.
Dropping the last requirement characterizes L\'evy processes\cite{Ber1998}.
Knowing the distribution at time 1 determines the distribution at
all times and the distribution at any time is infinitely
divisible. 

Prior to L\'evy and Khintchine, Kolmogorov \cite{kol1932} derived a
parameterization for the characteristic function of infinitely divisible
distributions having finite variance. There exists a number \(\gamma\)
and a non-decreasing function \(G(x)\) such that

\[
\log Ee^{iuX} = i\gamma u + \int_{-\infty}^\infty K_u(x)\,dG(x),
\]
where \(K_u(x) = (e^{iux} - 1 - iux)/x^2\).
Note \(\phi'(u) = i\gamma + i\int_{-\infty}^\infty (e^{iux} - 1)/x\,dG(x)\)
and \(\phi''(u) = -\int_{-\infty}^\infty e^{iux}\,dG(x)\)
so \(E X = -i\phi'(0) = \gamma\) and \(\Var X = -\phi''(0)
= \int_{-\infty}^\infty dG(x) = G(\infty) - G(-\infty)\).

\begin{lemma}
If \(X\) is infinitely divisible with Kolmogorov parameters \(\gamma\)
and \(G\), then \(Ee^{isX}e^{iuX} = Ee^{isX}Ee^{iuX^*}\) where
\(X^*\) has Kolmogorov parameters \(\gamma^*
= \gamma + \int_{-\infty}^\infty (e^{isx} - 1)/x\,dG(x) = -i\phi'(s)\)
and \(dG^*(x) = e^{isx}dG(x)\).
\end{lemma}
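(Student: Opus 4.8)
The plan is to reduce the asserted identity to a single equation between log-characteristic functions and then verify that equation by substituting the Kolmogorov representation. Throughout write $\phi(u) = \log Ee^{iuX}$, as in the excerpt, and understand $Ee^{iuX^*}$ to mean the quantity $\exp(i\gamma^* u + \int_{-\infty}^\infty K_u(x)\,dG^*(x))$ determined by the stated parameters, so that the lemma is an identity of characteristic functions. Since $e^{isX}e^{iuX} = e^{i(s+u)X}$, the left side equals $Ee^{i(s+u)X} = e^{\phi(s+u)}$ and $Ee^{isX} = e^{\phi(s)}$, so the claim is equivalent to $Ee^{iuX^*} = e^{\phi(s+u)-\phi(s)}$; writing $\phi^*(u) = \log Ee^{iuX^*}$, everything comes down to proving the one identity $\phi^*(u) = \phi(s+u) - \phi(s)$.

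First I would expand the right side from the representation. Subtracting the two integrals, the constant terms cancel and the drift contributes $i\gamma(s+u) - i\gamma s = i\gamma u$, leaving $\phi(s+u)-\phi(s) = i\gamma u + \int_{-\infty}^\infty (e^{i(s+u)x} - e^{isx} - iux)/x^2\,dG(x)$. Next I would expand $\phi^*(u) = i\gamma^* u + \int_{-\infty}^\infty K_u(x)\,dG^*(x)$ using $dG^*(x) = e^{isx}\,dG(x)$, whose integrand is $(e^{i(s+u)x} - e^{isx} - iux\,e^{isx})/x^2$. The exponential terms $e^{i(s+u)x}$ and $-e^{isx}$ agree in the two expressions and cancel when I form the difference; the only surviving integrand is the linear correction $(iux\,e^{isx} - iux)/x^2 = iu(e^{isx}-1)/x$. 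Hence $\phi(s+u)-\phi(s)-\phi^*(u) = i(\gamma-\gamma^*)u + iu\int_{-\infty}^\infty (e^{isx}-1)/x\,dG(x)$, and setting this to zero for all $u$ forces exactly $\gamma^* = \gamma + \int_{-\infty}^\infty (e^{isx}-1)/x\,dG(x)$, the stated drift. The equivalent form $\gamma^* = -i\phi'(s)$ is then immediate from $\phi'(u) = i\gamma + i\int_{-\infty}^\infty (e^{iux}-1)/x\,dG(x)$, recorded just above the lemma.

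The computation carries no genuine obstacle, but the one point deserving care is the bookkeeping of the linear correction $-iux$ built into $K_u(x)$: under the tilt $dG^*(x) = e^{isx}\,dG(x)$ this term becomes $-iux\,e^{isx}$ rather than $-iux$, and it is precisely this mismatch that must be reabsorbed into the drift, which is the entire content of the formula for $\gamma^*$. I would close by noting that all the manipulations are justified because $G$ is a finite measure (finite variance gives $\int_{-\infty}^\infty dG(x) = \Var X < \infty$), while $(e^{isx}-1)/x$ and $e^{isx}K_u(x)$ are bounded in $x$, with removable singularities at $x=0$, so every integral that appears converges absolutely and the difference above may be formed term by term.
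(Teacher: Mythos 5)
Your proof is correct and takes essentially the same route as the paper's: both reduce the claim to the log-characteristic-function identity \(\phi^*(u) = \phi(s+u) - \phi(s)\), and your cancellation of the exponential terms with the surviving correction \(iu(e^{isx}-1)/x\) is exactly the paper's kernel identity \(K_{s+u}(x) - K_s(x) = iu(e^{isx} - 1)/x + e^{isx}K_u(x)\), just rearranged. The only difference is presentational (you verify a difference vanishes; the paper factors \(Ee^{i(s+u)X}\) directly), and your closing remarks on absolute convergence are a welcome bit of extra care.
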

\begin{proof}
We have
\begin{align*}
E e^{isX}e^{iuX} &= E e^{i\gamma(s+u) + \int_{-\infty}^\infty K_{s+u}(x)\,dG(x)}\\
	&= Ee^{isX} e^{i\gamma u
		+ \int_{-\infty}^\infty (K_{s+u}(x) - K_s(x))\,dG(x)}
\end{align*}
A simple calculation shows
\(K_{s+u}(x) - K_s(x) = iu(e^{isx} - 1)/x + e^{isx}K_u(x)\)
so \(E e^{isX}e^{iuX} = Ee^{isX}Ee^{iuX^*}\) where \(X^*\) is
infinitely divisible with Kolmogorov parameters
\(\gamma^* = -i\phi'(s)\) and \(dG^*(x) = e^{isx}\,dG(x)\).
\end{proof}

We call \(X^*\) the \(K\)-transform of \(X\).

If \(X\) is standard normal, then \(\gamma = 0\), \(G = 1_{[0,\infty)}\)
and \(\phi(u) = -u^2/2\) so \(\gamma^* = is\) and \(dG^* = dG\).
We have \(e^{-s^2/2}Ee^{iu(is + X)} = e^{-s^2/2}e^{-su - u^2/2}
= e^{-(s + u)^2/2} = Ee^{isX}e^{iuX}\).

\begin{corollary}
If \(f\) and its Fourier transform are integrable, then
\(Ee^{isX}f(X) = Ee^{isX} Ef(X^*)\) where \(X^*\) is the
K-transform of \(X\).
\end{corollary}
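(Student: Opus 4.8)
The plan is to reduce the statement to the preceding lemma by expressing \(f\) through its Fourier transform and integrating over the frequency variable. Writing \(\hat f(u) = \int_{-\infty}^\infty f(x)e^{-iux}\,dx\), the hypothesis that both \(f\) and \(\hat f\) are integrable yields the Fourier inversion formula \(f(x) = \frac{1}{2\pi}\int_{-\infty}^\infty \hat f(u)e^{iux}\,du\), which holds for every \(x\) and in particular may be applied with \(x = X\) or \(x = X^*\).

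First I would substitute the inversion formula into the left-hand side to obtain
\[
Ee^{isX}f(X) = Ee^{isX}\,\frac{1}{2\pi}\int_{-\infty}^\infty \hat f(u)e^{iuX}\,du.
\]
Next I would interchange the expectation \(E\) with the \(u\)-integral, giving \(\frac{1}{2\pi}\int_{-\infty}^\infty \hat f(u)\,Ee^{isX}e^{iuX}\,du\). The preceding lemma identifies the inner expectation as \(Ee^{isX}e^{iuX} = Ee^{isX}\,Ee^{iuX^*}\), where \(X^*\) is the \(K\)-transform of \(X\); since the scalar factor \(Ee^{isX}\) does not depend on \(u\), I would pull it back outside the integral. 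Reassembling and applying inversion a second time,
\[
\frac{1}{2\pi}\int_{-\infty}^\infty \hat f(u)\,Ee^{iuX^*}\,du = E\,\frac{1}{2\pi}\int_{-\infty}^\infty \hat f(u)e^{iuX^*}\,du = Ef(X^*),
\]
so that \(Ee^{isX}f(X) = Ee^{isX}\,Ef(X^*)\), as claimed.

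The one genuine point requiring care, and the place where the integrability hypotheses are consumed, is the pair of Fubini interchanges between the expectation \(E\) and the \(u\)-integral. Both are justified because \(\abs{e^{isX}e^{iuX}}\le 1\) pointwise while \(\abs{\hat f(u)}\) is integrable in \(u\); hence the integrand is dominated by the fixed integrable function \(\frac{1}{2\pi}\abs{\hat f(u)}\) uniformly in the sample point, and the identical bound controls the \(X^*\)-integral. I expect this domination step to be the main (and essentially only) obstacle, and integrability of \(\hat f\) is precisely what renders it routine.
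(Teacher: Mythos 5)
Your proposal is correct and follows essentially the same route as the paper's proof: expand \(f\) by Fourier inversion, interchange expectation with the \(u\)-integral, apply the preceding lemma to \(Ee^{isX}e^{iuX}\), and invert again to recover \(Ef(X^*)\). The only difference is that you spell out the Fubini/domination justification, which the paper leaves implicit.
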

\begin{proof}
If \(f\) and its Fourier transform are integrable, then
\(f(x) = \int_{-\infty}^\infty e^{iux}\hat{f}(u)\,du/2\pi\), where
\(\hat{f}(u) = \int_{-\infty}^\infty e^{-iux}f(x)\,dx\) is the
Fourier transform of \(f\).
\begin{align*}
E e^{isX}f(X) &= \int_{-\infty}^\infty E e^{iuX}e^{iuX} \hat{f}(u)\,du/2\pi\\
	&= Ee^{isX}\int_{-\infty}^\infty E e^{iuX^*} \hat{f}(u)\,du/2\pi\\
	&= Ee^{isX} E f(X^*)
\end{align*}
\end{proof}

\begin{example}{(L\'evy Processes)}
Fix the spot rate \(r\), the initial stock price \(s\),
the drift \(\mu\), and the volatility \(\sigma\). Let \(L_t\) be a
L\'evy process and define \(X_t = (e^{rt}, se^{\mu t + \sigma L_t})$.
\end{example}
Again we look for deflators of the form \(e^{-rt}P\). If we define
the {\em cumulant} \(\kappa_t(s) = \log Ee^{sL_t}\) then \(\kappa_t(s)
= t\kappa_1(s)\) and \(e^{-t\kappa_1(\sigma) + \sigma L_t}\) is a martingale.
Since \(S_t\Pi_t = se^{(\mu - r)t + \sigma L_t}\),
the model is arbitrage free if \(\mu = r - \kappa_1(\sigma)\).

The formula for the forward value of put is
\(E(k - S_t)^+ = E(k - S_t)1(S_t \le k)
= k P(S_t\le k) - se^{rt}P(S_t^* \le k)\) where
\(S_t = se^{(r - \kappa_1(\sigma))t + \sigma L_t^*}\) and
\(L_t^*\) is the K-transform with \(is = \sigma\).

\section{Remarks}
\begin{itemize}
\item Not only do
traders want to know exactly how much they make upfront based on the
size of the position they put on, they and their
risk managers also want to hedge the subsequent gains they might make
under favorable market conditions.


\item Different counterparties have different short rate processes.
A large financial institution can fund trading strategies at a
more favorable rate than a day trader using a credit card.

\item As previously noted, $\partial v/\partial s \not= n$ in Example 2,
however $\partial (Rv)/\partial R = ns$ for both Example 2 and 3.
In words, the derivative of the future value of the option with
respect to realized return is the dollar delta.

\item It is not necessay to assume algebras for the prices and cash flows are are increasing.
If they are adapted to the algebras \((\BB_j)\) and
\(\BB_j\subseteq\AA_j\) for all \(j\) then \(X_j\Pi_j\) will
be well defined.
This is useful in order to model a recombining
tree. In the standard binomial model the atoms of $\BB_j$ are $\{W_j =
j - 2i\}$, $0\le i\le j$. This can be used to give a rigorous foundation
to path bundling algorithms, e.g., Tilley \cite{Til1993}.

\item This theory only allows bounded functions as models of prices and
positions. This corresponds to reality, but not to the classical
Black-Scholes/Merton theory. 
The fact that prices are bounded has no material consequences when
it comes to model implementation. An unbounded price
process can be replaced by one stopped at an arbitrarily large value. Since we can make the
probability of stopping vanishingly small, calculation of option prices
can be made arbitrarily close to those computed using the unbounded model.
Every model I have implemented had prices bounded by \(1.8 \times 10^{308}\).

\item Likewise, discrete time is not material problem since one could
model yocto second time steps. In fact, continuous time introduces
serious technical problems such as doubling strategies\cite{HarKre1979}.
Zeno wasn't the only one to distract
people's attention with this sort of casuistry. 

\item Measures being finitely additive is also not an issue. Countably
additive measures are also finitely additive and so all such
models fit into this framework. Interchanging limits and the Radon-Nikodym 
theorem for finitely additive measures are
more complicated than for countably additive measures, but these are not needed here. 

\item The examples show this theory has the same expressive power as the
standard theory and illustrates the usefulness of distinguishing prices
from cash flows to uniformly handle all types of instruments. 
There is no need to cook up a ``real world'' measure.
Not only does it ultimately get replaced, it adds technical
complications to the theory.

\end{itemize}

\section{Appendix: Origins}
While preparing this paper I had difficulty understanding who figured
out what when in the early theory. Cutting edge research is always messy.
This appendix is my attempt to clear that up and point out the
repercussions. Priority is the currency of academics, legacy
is the other side of that coin.

Currency is both sides of the coin for practitioners and I make my living
trying to provide them with tools they find useful. They usually
don't understand the subtleties of mathematical models but they 
know if the software implementation provides numbers that make sense.

As George Box said ``all models are wrong, but some are useful.''
Mathematical Finance is still in its infancy, but it has notched
up some significant victories. Dollar denominated fixed
income derivatives having maturity less than 4 years trade at
basis point spreads. Every bank has a different implementation,
but they all get the same answer. ``Practitioners'' in that market
can no longer rely on cunning and makeshift.

As Haug and Taleb\cite{HauTal2011} carefully delineate, the Black-Scholes
and even more sophisticated formulas were used well before 
Black, Scholes, and Merton showed up on the scene.
They underscore the importance of the no arbitrage condition
and are entirely correct that traders still
use ad hoc devices to produce numbers they find useful. 
Options are used to
determine model parameters and now play the role of primary securities 
in hedging more complex derivatives. Such is financial market progress.

However, they don't seem to appreciate the power of the mathematical
underpinnings. Ed Thorpe came up with a formula for calls and puts,
but didn't know how to extend that to price bonds with embedded
options\footnote{Andrew Kalotay, personal communication}. 
Academics have time to reflect on the paths blazed
by practicioners.
Exotic option pricing formulas require nontrivial
mathematics unobtainable through seat-of-the-pants methods.

It is beyond the scope of this appendix to review the tenor of the
time laid down by Markowitz\cite{Mar1952}, Tobin\cite{Tob1958}, Sharpe\cite{Sha1964}, Lintner\cite{Lin1965} and other pioneers 
in the field of quantitative finance, but they developed an
economic theory to quantify how diversification reduced risk. The
Capital Asset Pricing Model showed how
to create portfolios that could minimize systemic market risk.

Many of the fundamental results in the FTAP can be traced back
to Merton's unpublished, but widely circulated,
technical report\cite{Mer1970} that ultimately became chapter 11 in
his book on continuous time finance\cite{Mer1992}.
It uses a general equilibrium pricing model (intertemporal CAPM) 
to derive the Black-Scholes option model. His proof
did not require normally distributed returns or a quadratic
utility function, as CAPM did, foreshadowing Ross's Arbitrage
Pricing Theory\cite{Ros1976}.

Merton also derived what is now called the Black-Scholes partial
differential equation and showed
how individual sample paths could be used to model
prices directly instead of only considering expected values.
Black and Scholes introduced the idea of dynamic trading
when people were thinking in terms of portfolio selection.
They showed continuous time trading with
prices modeled by an It\=o diffusion allows perfect
replication and that the problem of estimating mean stock returns
was irrelevant to pricing options.

This had some deleterious knock on effects in the theory
of mathematical finance. Merton was so far ahead of his time
with the mathematical tools he introduced that generations
of people in his field overestimated the power of
mathematics when it came to modeling the complicated world
we live in. People that did not have his ability to understand
the math latched on to binomial models. Brownian motion is
a binomial model in wolves clothing. 

Haug and Taleb are on the right track when it comes to pointing
out the consequences of a theory that no practitioner would
find plausible. I embarrassed myself in my early career
when a trader asked me how to price a barrier option that was
triggered on the second touch. For some reason he
didn't buy my explanation about the infinite oscillatory behavior of
Brownian motion and that even using the 100th time it touched the barrier
would have the same theoretical price.

The work of Boyce and Kalotay \cite{BoyKal1979} was far
ahead of its time. They took a
practical Operations Research approach to modeling what happens
at the cash flow level, including counterparty credit and
tax considerations. Something clumsily being rediscovered in our
post September 2008 world.

The origin of the modern theory of derivative securities is based on
Stephen Ross's 1977 paper ``A Simple Approach to the Valuation of Risky Streams.'' He was the first to realize that the assumption of
no arbitrage and the Hahn-Banach theorem placed a constraint on the
dynamics of sample paths. It is a purely geometric result.
The price deflator is simply a positive measure used to find
a point in a cone. Normalizing that to a probability measure does 
not tell you the probability of anything, although the normalizing
factor does tell you the price of a zero coupon bond if your
model has one.

Ross's approach was not as rigorous as Merton's and
the attempts to place his results on sound mathematical
footing led to the the escalation of increasingly abstract mathematical
machinery outlined in the Review section. This paper
is an endeavor to provide a statement of the fundamental theorem
of asset pricing that practicioners can understand and a mathematically 
rigorous proof that is accessible to masters level students.


\bibliographystyle{plain}
\bibliography{ftapd}{}

\begin{thebibliography}{10}

\bibitem{Bac1900}
Louis Bachelier.
\newblock {\em Th\'eorie de la sp\'eculation}.
\newblock PhD thesis, La Sorbonne, May 1900.

\bibitem{BanMaz1933}
S.~Banach and S.~Mazur.
\newblock Zur theorie der linearen dimension.
\newblock {\em Studia Math.}, 4:100--112, 1933.

\bibitem{Ber1998}
J.~Bertoin.
\newblock {\em L{\'e}vy Processes}.
\newblock Cambridge Tracts in Mathematics. Cambridge University Press, 1998.

\bibitem{BiaCon2006}
Sara Biagini and Rama Cont.
\newblock Model-free representation of pricing rules as conditional
  expectations.
\newblock In Jiro Akahori, Shigeyoshi Ogawa, and Shinzo Watanabe, editors, {\em
  Stochastic Processes and Applications to Mathematical Finance}, pages 53--66,
  2006.

\bibitem{BlaSch1973}
Fischer Black and Myron Scholes.
\newblock The pricing of options and corporate liabilities.
\newblock {\em The Journal of Political Economy}, 81(3):637--654, 1973.

\bibitem{BoyKal1979}
W.~M. Boyce and A.~J. Kalotay.
\newblock Tax differentials and callable bonds.
\newblock {\em The Journal of Finance}, 34(4):825--838, 1979.

\bibitem{CoxRos1976}
John Cox and Stephen~A. Ross.
\newblock The valuation of options for alternative stochastic processes.
\newblock {\em J. Fincancial Econ.}, 3:145--166, 1976.

\bibitem{DalMorWil1990}
R.C. Dalang, A.~Morton, and W.~Willinger.
\newblock Equivalent martingale measures and no-arbitrage in stochastic
  securities market model.
\newblock {\em Stochastics and Stochastic Reports}, 29:185--201, 1990.

\bibitem{DelSch1994}
Freddy Delbaen and Walter Schachermayer.
\newblock A general version of the fundamental theorem of asset pricing.
\newblock {\em Mathematische Annalen}, 300:463--520, 1994.

\bibitem{Duf1996}
Darrell Duffie.
\newblock {\em Dynamic Asset Pricing Theory, 2nd Ed.}
\newblock Princeton University Press, 1996.

\bibitem{DunSch1954}
N.~Dunford and J.~T. Schwartz.
\newblock {\em Linear Operators}, volume~1.
\newblock Wiley Interscience Publishers, New York, 1954.

\bibitem{Gar1985}
Mark~B. Garman.
\newblock Towards a semigroup pricing theory.
\newblock {\em The Journal of Finance}, 40(3):847--861, 1981.

\bibitem{HarKre1979}
J.~Michael Harrison and David~M. Kreps.
\newblock Martingales and arbitrage in multiperiod securities markets.
\newblock {\em Journal of Economic Theory}, 20:381--408, 1979.

\bibitem{HarPli1981}
J.~Michael Harrison and Stanley~R. Pliska.
\newblock Martingales and stochastic integrals in the theory of continuous
  trading.
\newblock {\em Stochastic Processes and their Applications}, 11(3):215--260,
  1981.

\bibitem{HauTal2011}
Espen~Gaarder Haug and Nassim~Nicholas Taleb.
\newblock Option traders use (very) sophisticated heuristics, never the
  black–scholes–merton formula.
\newblock {\em Journal of Economic Behavior and Organization}, 77(2), 2011.

\bibitem{HoLee1986}
THOMAS S.~Y. HO and SANG-BIN LEE.
\newblock Term structure movements and pricing interest rate contingent claims.
\newblock {\em The Journal of Finance}, 41(5):1011--1029, 1986.

\bibitem{Lon1990}
Jr. John B.~Long.
\newblock The numeraire portfolio.
\newblock {\em Journal of Mathematical Economics}, 26:29--69, 1990.

\bibitem{Jor2006}
Philippe Jorion.
\newblock {\em Value at Risk: The New Benchmark for Managing Financial Risk
  (3rd ed.)}.
\newblock McGraw-Hill, 2006.

\bibitem{Kem1955}
J.G. Kemeny.
\newblock Fair bets and inductive probabilities.
\newblock {\em Journal of Symbolic Logic}, pages 263--273, 1955.

\bibitem{kol1932}
A.N. Kolmogorov.
\newblock Ancora sulla forma generale di un processo stocastico omogeneo.
\newblock {\em Atti Accad. Naz. Lincei Sez.}, 1(15):866--–869, 1932.

\bibitem{Kre1981}
D.M. Kreps.
\newblock Arbitrage and equilibrium in economics with infinitely many
  commodities.
\newblock {\em Journal of Mathematical Economics}, 8:15--35, 1981.

\bibitem{Lin1965}
John Lintner.
\newblock The valuation of risk assets and the selection of risky investments
  in stock portfolios and capital budgets.
\newblock {\em The Review of Economics and Statistics}, 47:13--–39.

\bibitem{Mar1952}
H.M. Markowitz.
\newblock Portfolio selection.
\newblock {\em The Journal of Finance}, 7:77–--91, 1952.

\bibitem{Mer1970}
Robert~C Merton.
\newblock A dynamic general equilibrium model of the asset market and its
  application to the pricing of the captial structure of the firm.
\newblock {\em Massachusets Institure of Technology Working Paper}, 1970.

\bibitem{Mer1973}
Robert~C. Merton.
\newblock Theory of rational option pricing.
\newblock {\em The Bell Journal of Economics and Management Science},
  4(1):141--183, 1973.

\bibitem{Mer1974}
Robert~C. Merton.
\newblock On the pricing of corporate debt: The risk structure of interest
  rates.
\newblock {\em Journal of Finance}, 29(2):449--470, 1974.

\bibitem{Mer1992}
Robert~C Merton.
\newblock {\em Continuous-Time Finance}.
\newblock 1992.

\bibitem{Pea1999}
Giuseppe Peano.
\newblock {\em Geometric calculus : according to the Ausdehnungslehre of H.
  Grassmann}.
\newblock Birkhauser, Boston, 1999.

\bibitem{Pli1997}
Stanley~R. Pliska.
\newblock {\em Introduction to Mathematical Finance: Discrete Time Models}.
\newblock Wiley-Blackwell, 1997.

\bibitem{Ros1976}
Stephen~A Ross.
\newblock The arbitrage theory of capital asset pricing.
\newblock {\em Journal of Economic Theory}, 13(3):341 -- 360, 1976.

\bibitem{Ros1978}
Stephen~A. Ross.
\newblock A simple approach to the valuation of risky streams.
\newblock {\em The Journal of Business}, 51(3):453--475, 1978.

\bibitem{Sha1964}
William~F. Sharpe.
\newblock Capital asset prices: A theory of market equilibrium under conditions
  of risk.
\newblock {\em Journal of Finance}, 19:425--–442, 1964.

\bibitem{Shi1955}
A.~Shimony.
\newblock Coherence and the axioms of confirmation.
\newblock {\em The Journal of Symbolic Logic}, 20:1--28, 1955.

\bibitem{ShiKabKraMel1994}
A.~N. Shiryaev, Yu.~M. Kabanov, O.~D. Kramkov, and A.~V. Melnikov.
\newblock Toward the {T}heory of {P}ricing of {O}ptions of {B}oth {E}uropean
  and {A}merican {T}ypes. {I}. {D}iscrete time.
\newblock {\em Theory of Probability and its Applications}, 39(1):14--60, 1994.

\bibitem{Til1993}
J.~A. Tilley.
\newblock Valuing american options in a path simulation model.
\newblock {\em Transaction of the Society of Actuaries}, 45:83--104, 1993.

\bibitem{Tob1958}
James Tobin.
\newblock Liquidity preference as behavior towards risk.
\newblock {\em The Review of Economic Studies}, pages 65--–86, 1958.

\end{thebibliography}

\end{document}